\DeclareMathOperator*{\argmin}{argmin}
\newtheorem{definition}{Definition}
\newtheorem{theorem}{Theorem}
\newcommand{\AlgName}{\textsc{Top-DP}\xspace}
\newcommand{\bheading}[1]{{\vspace{2pt}\noindent{\textbf{#1}}\hspace{2pt}}}
\newcommand{\rone}[1]{\textcolor{black}{#1}}
\newcommand{\rtwo}[1]{\textcolor{black}{#1}}
\newcommand{\ronetwo}[1]{\textcolor{black}{#1}}
\begin{document}
\title{Topology-aware Differential Privacy for Decentralized Image Classification}
\author{Shangwei Guo, Tianwei Zhang, Guowen Xu, Han Yu, Tao Xiang, and Yang Liu
    \thanks{T. Zhang is the corresponding author.}
    \thanks{S. Guo and T. Xiang are with College of Computer Science, Chongqing University, Chongqing, China (email: \{swguo, txiang\}@cqu.edu.cn).}
    \thanks{T. Zhang, G. Xu, H. Yu, and Y. Liu are with School of Computer Science and Engineering, Nanyang Technological University, Singapore (email: \{tianwei.zhang, guowen.xu, han.yu, and yangliu\}@ntu.edu.sg).}}
% \author[1]{Shangwei Guo}
% \author[1]{Tianwei Zhang}
% \author[2]{Tao Xiang}
% \author[1]{Yang Liu}
% \affil[1]{\textit{School of Computer Science and Engineering, Nanyang Technological University, Singapore}
% \protect\\
%         \texttt{\{shangwei.guo,tianwei.zhang,yangliu\}@ntu.edu.sg}}

% \affil[2]{\textit{College of Computer Science, Chongqing University, Chongqing, China}
% \protect\\
% \texttt{txiang@cqu.edu.cn}}
\maketitle

\begin{abstract}
Image classification is a fundamental artificial intelligence task that labels images into one of some predefined classes.  However, training complex image classification models requires a large amount of computation resources and data in order to reach state-of-the-art performance. This demand drives the growth of distributed deep learning, where multiple agents cooperatively train global models with their individual datasets. Among such learning systems, decentralized learning is particularly attractive, as it can improve the efficiency and fault tolerance by eliminating the centralized parameter server, which could be the single point of failure or performance bottleneck.

Although the agents do not need to disclose their training image samples, they exchange parameters with each other at each iteration, which can put them at the risk of data privacy leakage. Past works demonstrated the possibility of recovering training images from the exchanged parameters. One common defense direction is to adopt Differential Privacy (DP) to secure the optimization algorithms such as Stochastic Gradient Descent (SGD). Those DP-based methods mainly focus on standalone systems, or centralized distributed learning. How to enforce and optimize DP protection in decentralized learning systems is unknown and challenging, due to their complex communication topologies and distinct learning characteristics.

%The prosperity of Artificial Intelligence of Things (AIoT) enables edge devices to cooperatively train global deep neural networks in a decentralized fashion. In such learning systems, every agent constantly communicates with each other for exchanging parameters at each iteration, which can put him at the risk of data privacy leakage. Differential Privacy (DP) has been adopted to enhance the optimization algorithms such as Stochastic Gradient Descent (SGD) against privacy attacks such as membership attack. However, these approaches mainly focus on single-party learning, or centralized learning in the synchronous mode. Due to the unstable of AIoT devices and the complexity of decentralized communication topologies, a DP-SGD algorithm should tolerate asynchronous scenarios in practice. Besides, all existing works mainly restrict the sensitivity of the SGD algorithm in order to improve the model usability, which seems to have reached the performance limit.

In this paper, we design \AlgName, a novel solution to optimize the differential privacy protection of decentralized image classification systems. The key insight of our solution is to leverage the unique features of decentralized communication topologies to reduce the noise scale and improve the model usability. (1) We enhance the DP-SGD algorithm with this \emph{topology-aware} noise reduction strategy, and integrate the time-aware noise decay technique. (2) We design two novel learning protocols (synchronous and asynchronous) to protect systems with different network connectivities and topologies. We formally analyze and prove the DP requirement of our proposed solutions. Experimental evaluations demonstrate that our solution achieves a better trade-off between usability and privacy than prior works. To the best of our knowledge, this is the first DP optimization work from the perspective of network topologies.

\end{abstract}
\begin{IEEEkeywords}
  Decentralized Learning, Image Processing, Differential Privacy, Topology
\end{IEEEkeywords}
\IEEEpeerreviewmaketitle
\section{Introduction}\label{sec:introduction}
% privacy in learning
\IEEEPARstart{D}{eep} Learning (DL) has become one of the most popular and powerful machine learning methods for the image classification task.
%  a wide range of artificial intelligence tasks, e.g., computer vision, language processing, robot motion control, etc.
To learn an accurate DL model, a common technique is Stochastic Gradient Descent (SGD), which iteratively approaches the ideal model by minimizing the empirical performance on a large number of training images. To accelerate the training process and protect data privacy, this SGD task can be distributed to multiple agents with their own training image sets to collaboratively learn a shared image classification model. This distributed learning \cite{nedic2009distributed,konevcny2016federated,abou2012fusion} has gained a lot of popularity, especially in the edge computing \cite{zhao2019dynamic,yang2018high,chen2015robust}.

Distributed learning can be divided into two categories: centralized and decentralized learning \cite{lian2017can}. A centralized learning system utilizes a centralized parameter server to collect and aggregate estimates (i.e., model parameters) of agents at each iteration. In a decentralized system, agents interconnect based on a certain network topology and exchange estimates with their neighbors to reach consensus on the DL model. Distributed learning can prevent direct privacy leakage as each agent keeps its own private dataset locally at the training stage. However, it still faces the threats of indirect privacy leakage: the exchanged estimates among agents may contain information about their training sets. This gives honest-but-curious agents opportunities to compromise the data privacy of their neighbors.
Past works have demonstrated the feasibility and severity of model inversion attacks \cite{hitaj2017deep,zhu2019deep,he2019model,zhu2020private} and membership inference attacks \cite{melis2019exploiting,leino2020stolen} in distributed learning.

%image classification, language representation, reinforcement learning, and many more. To learn an ideal deep model, a typical technique is stochastic gradient descent (SGD). Specifically, SGD iteratively approaches the ideal model by minimizing the empirical performance on a large number of training points. However, the deep models may expose private information in these training datasets. In particular, with the prosperity of Artificial Intelligence of Things (AIoT), the privacy problem becomes more important in decentralized settings. On the one hand,  and may contain private information. On the other hand, every agent needs to constantly exchange information with their neighbors during the decentralized learning process, which increases the risk of privacy leakage.

To mitigate such privacy threats in distributed training, one promising solution is Differential Privacy (DP), which was originally introduced to preserve the privacy of individual data records in statistical databases \cite{dwork2006our}.
% Shokri et al. \cite{shokri2015privacy} applied this technique to Deep Learning (DL).
% training data protection.
% Since then
A number of studies have then applied DP to SGD to enhance the privacy of DL training in different environments \cite{chaudhuri2011differentially,abadi2016deep,zhang2018improving,li2018differentially,yu2019differentially,jayaraman19evaluatiing}. Most existing DP-SGD algorithms adopt additive noise mechanisms by adding random noise to the estimates in every training iteration. There exists a trade-off between privacy and usability, determined by the noise scale added during training: adding too much noise can meet the privacy requirements, at the cost of huge drop in model accuracy. As a result, it is critical to identify the minimal amount of noise that can provide desired privacy protection, and also maintain acceptable model performance.

Two common approaches were devised to optimize the DP mechanism and balance the privacy-usability trade-off. The first one is to carefully restrict the sensitivity of randomized mechanisms. For example, Abadi et al. \cite{abadi2016deep} bounded the influence of training samples on gradients by clipping each gradient in $l_2$ norm below a given threshold. Yu et al. \cite{yu2019differentially} optimized the model accuracy by adding decay noise to the gradients over the training time since the learned models converge iteratively. The second approach is to precisely track the accumulated privacy cost of the training process using composition techniques such as the strong composition theorem \cite{dwork2010boosting} and moments account (MA) \cite{abadi2016deep, bhowmick2018protection,hynes2018efficient,kang2019weighted,gong2020preserving}.
%By carefully restricting the sensitivity of the SGD and tracking the accumulated privacy loss, DP-SGD can guarantee the DP of deep models.
% , and (zero) concentrated DP \cite{dwork2016concentrated}.
%

Those DP-SGD solutions have been well developed and evaluated in centralized learning systems. In contrast, privacy protection in the decentralized learning setting is less explored. There are distinct differences between these two systems. First, decentralized systems have more interactions and parameter exchanges in order to reach the consensus. Each agent receives parameters from multiple neighbors and broadcasts the update to them.
%The accumulation of DP noise in these communications can increase the difficulty of balancing the privacy-usability trade-off.
Second, many decentralized systems are usually spontaneously organized and each agent is relatively independent. It is highly possible that certain nodes are offline due to the discrepancy of network bandwidth or unpredictable system faults. The asynchronous training mode \cite{lian2018asynchronous,luo2019heterogeneity} thus becomes more prevalent with higher reliability and efficiency. Hence, we raise two questions: (1) \emph{how can we design DP algorithms to support these unique features (e.g., \rtwo{decentralized topology}, asynchronous training)?} (2) \emph{How can we leverage these features to further optimize the DP solution and balance the privacy-usability trade-off?}

Prior works in differentially private decentralized learning mainly targeted the Alternating Direction Method of Multipliers (ADMM) algorithm with existing optimization techniques \cite{zhang2016dynamic,zhang2018improving,li2018differentially,ding2019optimal}. They cannot be used with the mainstream SGD-based training tasks. \rtwo{Other differentially private decentralized leanring methods \cite{li2018differentially,lou2017privacy,lu2020privacy,zhou2019privacy,hou2019differential} are designed either using existing DP techniques or for a specific application.  For example, \cite{li2018differentially} simply applied the standard DP technique (e.g., tracking accumulated privacy loss \cite{dwork2010boosting}) from the centralized setting to the decentralized one.} These optmization techniques have been well studied, and seem to reach the performance limit. In contrast, the unique topology features were never considered.

In this paper, we present \AlgName, a novel \textbf{Top}ology-aware \textbf{D}ifferential \textbf{P}rivacy approach for SGD-based training in decentralized systems.
% \AlgName aims to improve the model usability by reducing the noise scale, while still guaranteeing the differential privacy.
%Most existing works focused on restricting the sensitivity of the SGD algorithm and tracking the privacy budget, which seem to have reached the performance limit.
% \tianwei{Can we say this argument?}\shangwei{Sure}
\AlgName leverages network features of decentralized systems to optimize the randomized mechanism. The key idea is that each agent takes into account the injected noise from its neighbors when adding its own noise to the aggregated parameter. Such noise reuse can significantly reduce the actual noise scale added by each agent, but still satisfying the DP requirement. In addition, \AlgName can also be integrated with the noise decay technique from the standalone training mode, to further optimize the DP protection in decentralized systems.

Based on this strategy, we design two new learning protocols to realize our optimization. The first one is for synchronous training mode. Different from existing styles, each agent calculates and sends different aggregated estimates to different neighbors. This can guarantee that each parameter exchange can always enjoy the maximal benefit from the topology-aware strategy. However, such advantage becomes minor when two connected agents share the same neighbors in a network topology. To copy with this corner case and save communication bandwidth, we introduce an asynchronous training protocol: at every iteration, each agent only pairs with one neighbor which is randomly picked to meet the noise reduction criterion. Then the parameter exchange between the pair can reduce the noise scale, and eliminate unnecessary communication costs in total.

We extensively validate the privacy and effectiveness of our proposed solution. From the theoretical view, we formally prove that each agent can guarantee differential privacy with significantly reduced noise. Empirically, we conduct comprehensive experiments to demonstrate that our solution outperforms prior works and techniques under various system configurations, datasets and DL models. We make the following contributions in this paper:
%We adopt two design strategies to improve the usability of the proposed algorithm. First, we focus on optimizing the SGD algorithm by restricting the sensitivity of gradients. Second, we maximize the noise advantage of each agent to his neighbors by sending carefully designed estimates to the neighbors.
% \tianwei{This does not make sense. These two strategies are not the reasons that enable the asynchronous mode. You need to say what you have done that can make your solution applicable to asynchronous mode.}

% contribution
%Our new DP-SGD solution for decentralized learning has the following contributions:
\begin{itemize}
    \item To best of our knowledge, this is the first work that utilizes the network topology feature to enhance the usability of DP in distributed learning systems.
    \item We propose two novel learning protocols to achieve DP optimization for both synchronous and asynchronous training modes.
    \item We formally prove that our solution can guarantee the DP requirement for all the agents, and analyze its advantage under different decentralized settings.
    \item We conduct extensive experiments to show the superior of our method over prior works with various scenarios and image classification tasks.
\end{itemize}

The rest of this paper is organized as follows. Section~\ref{sec:problemdef} introduces formal definitions of decentralized systems, differential privacy and problem statement. Section~\ref{sec:algorithm} presents the topology-aware and time-aware strategies for decentralized systems. Section~\ref{sec:protocol} illustrates two learning protocols for synchronous and asynchronous training modes, respectively. Section \ref{sec:analysis} presents our privacy analysis, and Section~\ref{sec:experiments} shows the experimental evaluations of our approach under various system settings. We review the related works in Section~\ref{sec:literature}, and conclude in Section \ref{sec:conclusion}.
\section{Related Work}\label{sec:literature}
Differential privacy has been adopted to protect the individual privacy of training datasets and a large amount of DP-SGD algorithms have been proposed \cite{jayaraman19evaluatiing,song2021systematic}. We classify these algorithms into two categories: DP-SGD for standalone and distributed learning systems.
\subsection{DP-SGD for Standalone Learning Systems}
% one training dataset
For standalone systems, there are commonly two possible ways to add random noise. The first one is to inject noise to the objective function. For instance, Chaudhuri et al. \cite{chaudhuri2011differentially} perturbed the objective
function before optimizing over classifiers and proved that the objective perturbation is DP if certain convexity and differentiability criteria hold. The sensitivity analysis methods in \cite{chaudhuri2011differentially} relies on a strong convexity assumption. However, most objective function is non-convex. Phan et al. \cite{phan2016differential} attempted to use the objective perturbation by replacing the non-convex function with a convex polynomial function. To this end, a new convex polynomial function was introduced in \cite{phan2016differential} to approximate the non-convex one. However, this would change the learning protocol and, even worse, sacrifice the model's performance.

A simpler but more popular way is to add random noise to the gradients. Abadi et al. \cite{abadi2016deep} achieved DP by adding Gaussian noise to the gradients of each iteration. This approach restricts the sensitivity of randomized mechanisms, i.e., the influence of training data on gradients, by clipping each gradient in $l_2$ norm below a given threshold. Abadi et al. \cite{abadi2016deep} also proposed MA to reduce the added noise by keeping track of a bound on the moments of the privacy loss during the training process. Yu et al. \cite{yu2019differentially}  focused on the DP problem during the sharing and publishing of pre-trained models. They optimized the model accuracy by adding decay noise to the gradients over the training time since the learned models converge iteratively. They improved the model usability by employing a generalization of concentrated DP, based on the observation that the privacy loss of an additive noise mechanism follows a sub-Gaussian distribution.

Another way to improve the model usability lies in precisely tracking the overall privacy cost of the training process. Shokri et al. \cite{shokri2015privacy} and Wei et al. \cite{wei2020federated} composed the additive noise mechanisms using the advanced composition theorem \cite{dwork2010boosting}, leading to a linear increase in the privacy budget. In \cite{abadi2016deep, bhowmick2018protection,hynes2018efficient,kang2019weighted}, moments account (MA) was used to reduce the added noise by keeping track of a bound on the moments of the privacy loss during the training process.
% %  \tianwei{This sentence is confusing. What is privacy loss random variable?}
Other algorithms \cite{park2017dp,jayaraman2018distributed,yu2019differentially} were designed to improve the model usability using (zero) concentrated DP \cite{dwork2016concentrated}, based on the observation that the privacy loss of an additive noise mechanism follows a sub-Gaussian distribution. Recently, Asoode et al \cite{asoodeh2021three} proposed an optimal DP analysis to further reduce the scale of added noise during the training process.

\subsection{DP-SGD for Centralized Learning Systems}
Some works~\cite{shokri2015privacy,bhowmick2018protection,hynes2018efficient,jayaraman2018distributed,kang2019weighted} applied the DP techniques from the standalone mode to the centralized learning systems to preserve the privacy of the training data for each agent. For example, Shokri et al. \cite{shokri2015privacy} proposed a privacy-preserving distributed learning algorithm by adding Laplacian noise to each agent's gradients to prevent indirect leakage. Kang et al. \cite{kang2019weighted} adopted weighted aggregation instead of simply averaging to reduce the negative impact caused by uneven data scale.

In terms of the accumulated privacy loss, Kang et al. \cite{kang2019weighted} employed MA to track the overall privacy cost of the training process. Wei et al. \cite{wei2020federated} perturbed agents' trained parameters locally by adding Gaussian noise before uploading them to the server for aggregation and bounded the sensitivity of the Gaussian mechanism by clipping. Shokri et al. \cite{shokri2015privacy} and Wei et al. \cite{wei2020federated} composed the additive noise mechanisms using the strong composition theorem \cite{dwork2010boosting}, leading to a linear increase in the privacy budget.

\rtwo{Federated learning, as a typical example of centralized learning systems, has gained great popularity. A variety of works \cite{geyer2017differentially,truex2019hybrid,choudhury2019differential,wei2020federated,wei2021user} have attempted to solve the privacy problem using the above DP techniques in such federated systems. For instance, Truex et al. \cite{truex2019hybrid} proposed a hybrid method that leverages both secure multiparty computation and differential privacy techniques to achieve privacy-preserving federated learning. Choudhury et al. \cite{choudhury2019differential} designed a federated learning system that enables to process sensitive health data with differential privacy protection. Wei et al. \cite{wei2021user} provided user-level privacy protection for federated learning systems and improve the usability of trained models.}

\rtwo{Some DP-SGD methods \cite{wei2021user} for centralized learning systems, especially for federated learning systems, can be applied to decentralized systems as well. However, these methods are designed specifically for the corresponding centralized collaboration schemes, and do not well optimized for the decentralized setting. In contrast, our \AlgName utilizes the unique features of decentralized systems and can significantly improve the usability of the trained models of all agents compared with state-of-the-art DP-SGD for federated learning.}

\subsection{DP-SGD for Decentralized Learning Systems}
\rtwo{Several DP approaches \cite{zhang2016dynamic,zhang2018improving,li2018differentially,ding2019optimal,zhou2019privacy,lu2020privacy} were proposed for decentralized learning systems. However, they are mainly for the ADMM or gradient tracking algorithms, while there are very few solutions for the SGD algorithm in decentralized systems. Recently, motivated by the privacy leakage problem in big data analytics, Li et al. \cite{li2018differentially} proposed DP-SGD algorithms by adding Gaussian and Laplacian noise to the gradients. Zhou et al. \cite{zhou2019privacy} designed a differential privacy decentralized learning system for social recommendation systems. However, they track the accumulated privacy loss using the existing DP techniques such as the strong composition theorem, which limits the performance of learned models.}

Similar as the centralized DP approaches, all those decentralized DP solutions (both for ADMM, SGD or other optimization algorithms) only apply existing DP techniques and focus on restricting the sensitivity of the optimization algorithm. Besides, they require the decentralized systems to be well synchronous. In contrast, our \AlgName provides a novel optimization direction from the network topology. It can also be combined with other optimization algorithms such as ADMM and existing techniques (e.g., noise decay). The learning protocols in \AlgName can be applied to different training modes efficiently.
\section{Background and Problem Statement}\label{sec:problemdef}
In this section, we first formalize decentralized systems. Then, we present the threat model and the definition of DP for decentralized learning.
\subsection{Decentralized Systems}
We consider a decentralized system whose communication topology can be represented as an undirected graph: $\mathcal{G} = (V, E)$. $V$ denotes a set of participates (or agents) in this decentralized network. $E$ represents the set of communication links among the agents, with the following two properties:
\begin{enumerate}
\item  $(i, j) \in E$ if and only if agent $i$ can receive information from agent $j$;
\item $(j, i) \in E$ if $(i, j) \in E$.
\end{enumerate}
We assume this undirected graph is fully connected, i.e., giving two arbitrary agents $i$ and $j$, there always exists at least one path that connects them. This property can guarantee that information can be exchanged among all agents \cite{tsitsiklis1984problems,nedic2009distributed}.
%Fig. \ref{fig:framework} illustrates two decentralized communication systems of smart home, where each device is a agent in the decentralized system and they can communicate with each other though the communication topology. While the synchronous mode requires each device to communicate with all its neighbors in each iteration, some communication links are allowed to terminate in the asynchronous model.

\begin{figure}[t]
	\centering
	\includegraphics[width=\columnwidth]{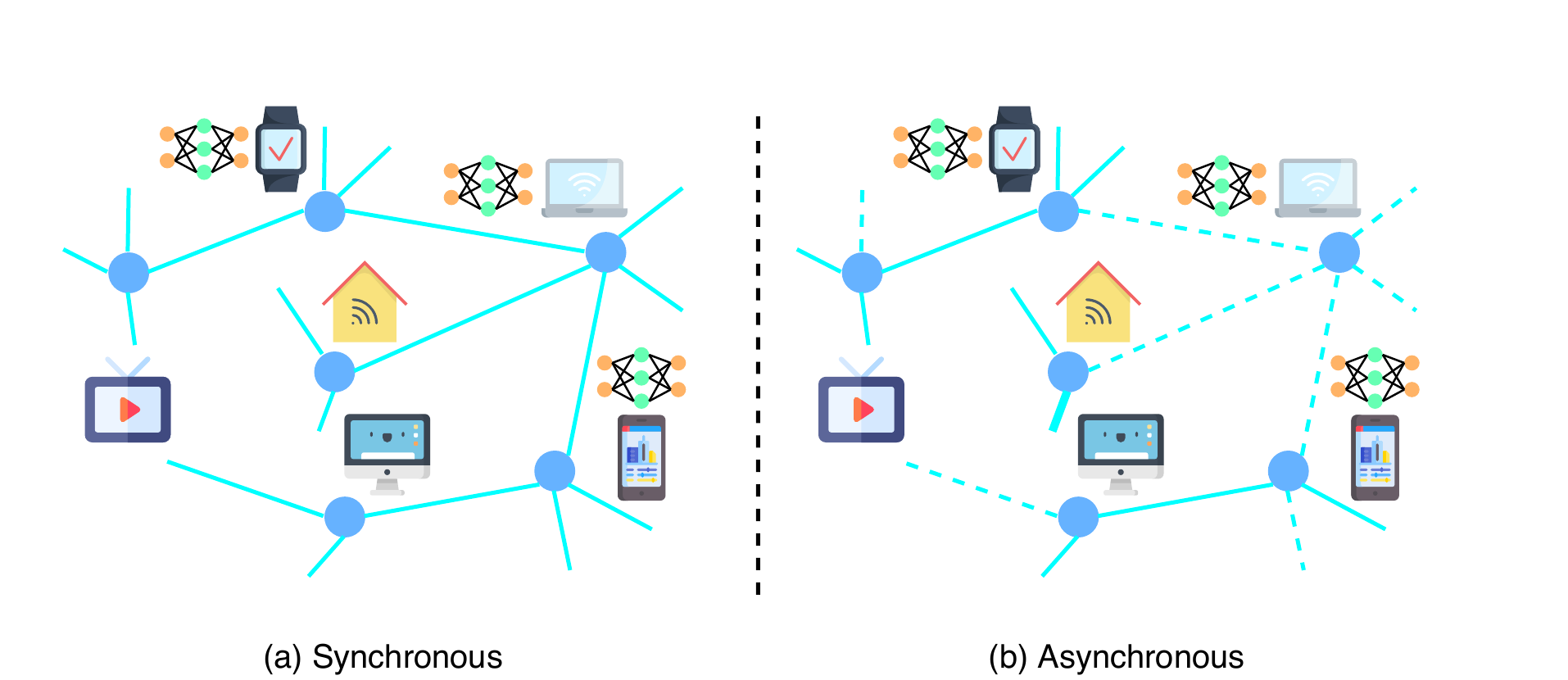}\label{fig:framework_decen}
	\caption{(a) Synchronous training and (b) asynchronous training in a decentralized learning system. Solid lines represent network connections with parameter exchanges between agents; dotted lines represent connections not used for parameter exchanges in certain iteration.}
	\label{fig:framework}
\end{figure}

% In this decentralized learning system, the agents cooperatively train a model by optimizing the loss function with SGD and exchanging estimates with their neighbors.
\rone{Let $x \in \mathbb{R}^d$ be the $d$-dimensional estimate vector of a DL model. Each agent $i \in V$ obtains a private training dataset $D_i$, consisting of independent and identically distributed (i.i.d.) data samples from a distribution $D$. Those agents train a shared model by solving the optimization problem \cite{lian2017can,nedic2009distributed}: $$\min_{x\in \mathbb{R}^d} \mathbb{E}_{\xi\sim D}l(x;\xi),$$ where $\xi$ is a training data sample from $D$.
% \tianwei{1. Do we still need to consider Byzantine agents? 2. Just one sample? You use mini-batch to train}\shangwei{1. fixed. 2. yes, we use mini-batch to train. To be general, we use one sample here}
During training, each agent $i$ calculates its local estimate $x_{i}$, and exchanges $x_{i}$ with its neighbors for parameter update. There are basically two training modes for this iterative process. In the \emph{synchronous} mode (Fig. \ref{fig:framework} (a)), each agent $i$ needs to receive the estimates from all its neighbors before updating the model. In the \emph{asynchronous} mode (Fig. \ref{fig:framework} (b)), agent $i$ exchanges parameters with only part of its neighbors for model update. This happens when the agent just wants to choose a smaller number of neighbors for lower communication and computation cost, or when some of its neighbors fail to respond due to unexpected system or network faults.}

% \tianwei{this sentence is not clear. Define what is asynchronous mode in one sentence}. \shangwei{Done}
To adapt to both synchronous and asynchronous modes as well as maintaining the convergence rate, at each iteration, agent $i$ (1) first collects estimates from its neighbor(s); (2) randomly selects a neighbor $j^*$ from the participated neighbor(s); (3) utilizes the following update rule \cite{lian2018asynchronous,blanchard2017machine} to aggregate estimates and calculate the local estimate:
%  \tianwei{The learning process is not described clearly. It is better if you can give detailed steps for both synchronous and asynchronous models: step 1... step 2... step 3...}\shangwei{Done}
\begin{equation}
    x_{i} = \alpha x_{i} + (1-\alpha) x_{j^*}  - \lambda g(x_{i}, \xi_{i})
\end{equation}
where $\alpha \in [0,1]$ is a hyper-parameter determining the weight of the local estimate; $\lambda$ is the learning rate; $g(x_{i}, \xi_{i})$ is the stochastic gradient with $\xi_{i} \in D_i$. The gradient can also be replaced by a mini-batch of stochastic gradients \cite{lian2017can,lian2018asynchronous}.

\subsection{Threat Model}
\rone{In a decentralized system, we assume the agents are \emph{honest-but-curious}: all the agents agree on the proposed learning protocol and objective in advance. They will also strictly follow the steps of training and exchanging parameters during collaborative training. However, there exist some suspicious agents who attempt to passively steal the information and properties of their neighbors' datasets by analyzing the model parameters received at each iteration. We further assume that these suspicious agents will not collude to conduct the privacy attacks. Only connected neighbors are allowed to exchange information following the distributed training protocol.}

\rtwo{For decentralized learning systems, agents are connected directly or indirectly. Our goal is to adopt DP to protect the training data privacy of all agents.} \rone{DP is a rigorous mathematical framework to protect the privacy of individual records in a database when the aggregated information about this database is shared among untrusted parties \cite{dwork2006our}. } \rone{Thus, we formally define decentralized learning with DP as follows:}

\begin{definition}(DP of Decentralized Learning)
    A decentralized learning system is $\{(\epsilon_i, \delta_i)\}_{i \in V}$ differentially private if for each agent $i$, the randomized mechanism $\mathcal{M}_i: \mathcal{D}_i \rightarrow \mathcal{R}$ with domain $\mathcal{D}_i$ and range $\mathcal{R}$ satisfies $(\epsilon_i, \delta_i)$-DP, i.e., if for any two neighboring datasets $D_i, D_i'$ and any subset of outputs $S \subseteq  R$, the following property is held:
    \begin{equation}
        Pr[\mathcal{M}(D_i) \in S] \leq e^{\epsilon_i}Pr[\mathcal{M}(D_i') \in S] + \delta_i.
    \end{equation}
\end{definition}
$\mathcal{M}_i$ is restricted by two parameters: $\epsilon_i$ and $\delta_i$. $\epsilon_i$ is the privacy budget of agent $i$ to limit the privacy loss of training data. $\delta_i$ is a relaxation parameter that allows the privacy budget of $\mathcal{M}_i$ to exceed $\epsilon_i$ with probability $\delta_i$.
A decentralized learning system is differentially private if all agents are differentially private. Each agent can set its own privacy budget. Alternatively, the entire system can enforce a uniform privacy budget for all agents.

To achieve differentially private decentralized learning, a common and straightforward way is to use additive noise mechanisms at each iteration \cite{jayaraman19evaluatiing}. Specifically, we use Gaussian mechanism and denote $\sigma_i$ as the noise parameter of agent $i$. At each iteration, agent $i$ adds the Gaussian noise, $G_i = G(\sigma_{i}^2)$, to the updated local estimate to guarantee differential privacy (Eq. \ref{eq:aggre-dp}). Then, $i$ sends $\widetilde{x}_{i}$ to its neighbors.

\begin{equation}
\label{eq:aggre-dp}
    \widetilde{x}_{i} = \alpha \widetilde{x}_{i} + (1-\alpha) x_{j^*} - \lambda g(\widetilde{x}_{i}, \xi_{i}) + G_i.
\end{equation}
\section{Optimization Strategies}\label{sec:algorithm}

% \subsection{Threat Model}
% In the decentralized system, we assume that all agents are honest-but-curious: every agent is obliged to follow the learning protocol, but may analyze the data flow and try to obtain certain information about its neighbors' datasets during the training procedure. In particular, we consider to protect the individual privacy of agents. We formally define decentralized learning with DP as follows.
% \begin{definition}(Decentralized Learning with DP)
%     A randomized mechanism $\mathcal{M}_i: \mathcal{D}_i \rightarrow \mathcal{R}$ with domain $\mathcal{D}_i$ and range $\mathcal{R}$ satisfies $(\epsilon_i, \delta_i)$-differential privacy if for any two adjacent datasets $\mathcal{D}_i, \mathcal{D}_i'$ of agent $i$ and any subset of outputs $S \subseteq  \mathcal{R}$ it holds that
%     \begin{equation}
%         Pr[\mathcal{M}_i(\mathcal{D}_i, \mathcal{N}_i) \in S] \leq e^{\epsilon_i} Pr[\mathcal{M}_i(\mathcal{D}_i', \mathcal{N}_i) \in S] + \delta_i.
%     \end{equation}
% \end{definition}
% A decentralized learning algorithm is differentially private if for $\forall i \in V, \epsilon_i, \delta_i$, $\mathcal{M}_i$ satisfies $(\epsilon_i, \delta_i)$-differential privacy. Note that $\mathcal{M}_i$ only needs to guarantee the differential privacy property against agent $i$'s neighbors. Thus, if the decentralized communication topology changes, the amount of noise added may be different. Each agent can also decide his own privacy budget and a decentralized learning system can also set a uniform privacy budget as a special case of the definition.

As shown in Eq. \ref{eq:aggre-dp}, the random noise $G_i$ added into the aggregated estimate must be large enough to satisfy the privacy requirement. However, adding too much noise can affect the model accuracy. So it is important to balance this trade-off.
This section presents the strategies adopted in \AlgName to reduce the amount of noise for each agent to improve the usability of trained models, without violating the DP requirement. We start with a novel topology-aware noise reduction strategy. Then we extend time-aware noise decay to decentralized systems.

%We propose a novel DP-SGD algorithm, which can reduce the amount of noise for each agent to improve the usability of trained models, without violating the DP requirement. Our algorithm is general-purpose, and can be applied to both synchronous and asynchronous modes. It consists of two strategies and one learning protocol, as described below.

\subsection{Strategy 1: Topology-aware Noise Reduction}
\label{sec:strategy1}

Existing DP-SGD solutions all assume that the required noise scale only depends on the agents themselves. In decentralized systems, the communication topology can affect the amount of noise as well. Our topology-aware noise reduction strategy is able to reduce the noise scale of each agent when considering its connectivity with its neighbors. The key insight of our approach is that \emph{the received estimates from other neighbors also contain certain noise, which can contribute to the noise scale of the aggregated estimate, thus reducing the amount of noise added by the agent itself.}

\begin{figure}{t}

    \includegraphics[width=0.9\columnwidth]{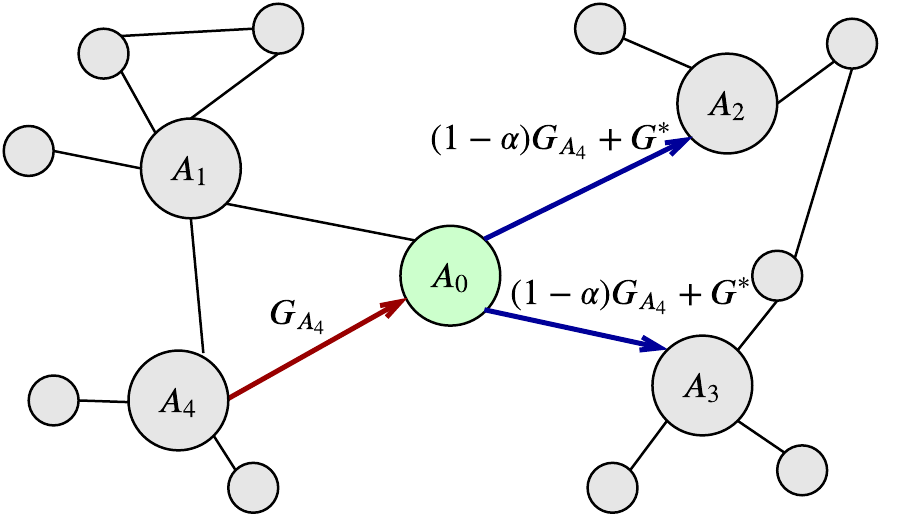}
    \caption{An illustrative example of topology-aware noise reduction.}
%    If agent $A_0$ selects $A_4$ for update, the noise of the estimate of $A_4$ can be used to reduce the amount of noise added to $A_2, A_3$, where $G^* = G(\sigma_{A_0}^2 - (1-\alpha)^2\sigma_{A_4}^2)$.}
    \label{figs:connection}

\end{figure}

Fig. \ref{figs:connection} gives an illustrative example. We consider an agent $A_0$ with four neighbors, where $A_1$ and $A_4$ are connected as well. When $A_0$ obtains all estimates of its neighbors, we assume it picks the estimate $\widetilde{x}_{A_4}$ of $A_4$ for aggregation with its own estimate and gradient. Since the received $\widetilde{x}_{A_4}$ also includes Gaussian noise $G_{A_4}$, then the aggregated estimate following Eq. \ref{eq:aggre-dp} will have the corresponding random component $(1-\alpha)G_{A_4}$. As a result, when generating the estimate for agent $A_2$ or $A_3$, $A_0$ does not need to add the full-scale noise $G_{A_0}$. It only needs to inject the noise $G^*$ such that $$G_{A_0}=G^*+(1-\alpha)G_{A_4},$$ which can meet the DP requirement, but reduce the actual amount of noise.

It is worth noting that the noise scale $G^*$ is not applicable when generating estimates for $A_1$ or $A_4$. For $A_4$, since it already knows its own parameter $\widetilde{x}_{A_4}$, then $G_{A_4}$ is not random noise anymore. It is similar for $A_1$ as it receives $\widetilde{x}_{A_4}$ from $A_4$. Then for these two agents, we can pick another agent (e.g., $A_2$ or $A_3$) and generate a different estimate for them with still reduced noise scale. It is worth noting that our strategy allows the agent to send different estimates to different neighbors in one iteration, which is different from conventional distributed learning systems.

Formally, given an agent $i$, for each of its neighbors $j\in \mathcal{N}_i$, we define $$\mathcal{N}_i^j = \mathcal{N}_i \backslash (j\bigcup \mathcal{N}_j),$$ which is the set of $i$'s neighbors that are not connected to $j$ (or $j$ itself). For instance, in Fig. \ref{figs:connection}, we have $\mathcal{N}_{A_0}^{A_1}=\mathcal{N}_{A_0}^{A_4}=\{A_2, A_3\}$, $\mathcal{N}_{A_0}^{A_2}=\{A_1,A_3,A_4\}$ and $\mathcal{N}_{A_0}^{A_3}=\{A_1,A_2,A_4\}$. This also means that $j$ can be used in the aggregation for all agents in $\mathcal{N}_i^j$ with the reduced noise scale. Then our goal is to find a minimal set $\widehat{\mathcal{N}_i}$, such that using the agents inside this set for aggregation can cover all the neighbor agents of $i$. Note that there can exist a neighbor $j'$ that is connected to every neighbor in $\mathcal{N}_i$. Then we cannot find an non-adjacent neighbor to cover it, and should exclude it from $\mathcal{N}_i$. This process is described in Eq. \ref{eq:set-cover}. We will solve it heuristically in Section \ref{subsec:protocol}.

\begin{equation}
\label{eq:set-cover}
\widehat{\mathcal{N}_i} = \argmin_{\mathcal{N} \subseteq \mathcal{N}_i} (\bigcup_{j \in \mathcal{N}} \mathcal{N}_i^j = \mathcal{N}_i\backslash(\bigcup_{\mathcal{N}_i^{j'}=\emptyset} j'))
\end{equation}

After identifying $\widehat{\mathcal{N}_i}$, for $\forall j \in \mathcal{N}_i$, if $j$ is connected to every neighbor in $\mathcal{N}_i$ (i.e., $\mathcal{N}_i^{j}=\emptyset$), then agent $i$ just sends the local estimate with full-scale noise to $j$. Otherwise, there exists at least one neighbor $k \in \widehat{\mathcal{N}_i}$ such that $j\in \mathcal{N}_i^k$. Then the noise scale from $i$ to $j$, $G_i^j = G({\sigma_{i}^j}^2)$, should satisfy Eq. \ref{eq:aggre-red}(a) in order to guarantee the DP requirement against $j$, where $G_i, G_k$ are the full-scale noise. According to the additivity of Gaussian distribution, we calculate the noise parameter $\sigma_{i}^j$ via Eq. \ref{eq:aggre-red}(b). With this reduced noise scale, agent $i$ can update the estimate for agent $j$ based on Eq. \ref{eq:aggre-red}(c).
% \tianwei{Does noise scale change with $t$? Can we remove the $t$ in the equation?}

\begin{subequations}
\label{eq:aggre-red}
  \begin{align}
& G_i = (1-\alpha)G_k + G_i^j \\
& \sigma_{i}^j = \sqrt{\sigma_{i}^2 - (1- \alpha)^2\sigma_{k}^2} \\
& \widetilde{x}_{i}^j = \alpha\widetilde{x}_{i} + (1-\alpha)\widetilde{x}_{k}^i  - \lambda g(\widetilde{x}_{i}, \xi_s) + G_i^j
\end{align}
\end{subequations}

\subsection{Strategy 2: Time-aware Noise Decay}
\label{sec:nose-decay}
Our topology-aware strategy can be combined with existing state-of-the-art techniques from other systems to enhance the optimization effects. We use the time-aware noise decay as an example. This technique was originally proposed in \cite{yu2019differentially}, to optimize the DP protection of model training in standalone systems. Here we apply this technique to decentralized systems. The key idea is that \emph{the model converges and the norm of gradients decreases as the training iteration increases. Thus, the sensitivity of the Gaussian mechanism decreases, allowing us to inject less noise to the gradients}. Note that the training datasets are distributed in different agents, all agents in the decentralized system should reach a consensus on the noise decay schedule to tolerate the differences in the datasets.

%During the training process, each agent updates its estimates of the SGD algorithm iteratively.
Specifically, compared to the aggregation process in Eq. \ref{eq:aggre-red}(c), our first modification is to clip the gradients in $l_2$ norm to bound their size at each training iteration. We follow the method from \cite{abadi2016deep}: given a clipping threshold $C$, the clipped gradient vector $\bar{g}$ is bounded by $C$, as shown in Eq. \ref{eq:noise-decay}(a).

Our second modification is to dynamically reduce the noise scale over the training time. Without loss of generality, we use step decay to reduce the noise scale every few epochs. Let $\sigma_{0, i}$ be the initial noise parameter of agent $i$. The noise parameter of agent $i$ at the $t$-th iteration is shown in Eq. \ref{eq:noise-decay}(b), where $\gamma \in (0, 1)$ is the reduction factor and $period$ is the reduction step of noise decay.

\begin{subequations}
\label{eq:noise-decay}
  \begin{align}
& \bar{g} (\widetilde{x}_{i}, \xi_s) = \frac{g(\widetilde{x}_{i}, \xi_s)}{\max(1, \frac{|g(\widetilde{x}_{i}, \xi_s)|}{C})} \\
& \sigma_{t, i} = \text{Decay}(\sigma_{0, i}, t) = \sigma_{0, i}\gamma^{\lfloor \frac{t}{period} \rfloor }
\end{align}
\end{subequations}

% algorithm details
\section{Topology-aware Learning Protocols}\label{sec:protocol}
With the topology-aware and time-aware strategies, we design two end-to-end decentralized learning protocols for synchronous and asynchronous modes, respectively.

\subsection{Synchronous Topology-aware Protocol}\label{subsec:protocol}
In the synchronous mode, each agent requires the estimates from all its neighbors at each iteration. Then it solves Eq. \ref{eq:set-cover} and calculates the parameters for different neighbors following Eqs. \ref{eq:aggre-red} and \ref{eq:noise-decay}. \rtwo{We design a synchronous topology-aware protocol to support DL on decentralized topologies and Algorithm \ref{alg:synchronous} illustrates the detailed communication and learning process of an agent $i$.}

The algorithm takes as input the initial estimate $x_0$, initial noise parameter $\sigma_{0, i}$, learning rate $\lambda$, and number of iterations $T$.
% , and the clipping threshold $C$.
Before iteratively optimizing the shared model, agent $i$ sends $\sigma_{0, i}$ to and receives $\{\sigma_{0, j}\}_{j \in \mathcal{N}_i}$ from its neighbors (Line \ref{line:broadcast}). For $j \in \mathcal{N}_i$, agent $i$ computes the neighbor set $\mathcal{N}_i^j$, including all neighbors that do not connect with $j$.
Then, it updates the initial estimate and sends $\widetilde{x}_{0, i}$ to its neighbors (Lines \ref{line:initial}-\ref{line:initial_end}).
At the $i$-th iteration, it first computes the full-scale noise parameter $\sigma_{t, i}$ using the time-aware noise decay strategy (Line \ref{line:decay}). Then, it computes the clipped gradient using a randomly selected sample $\xi_s$, generates estimates for all its neighbors and updates its local estimate using the proposed topology-aware noise reduction strategy.

To heuristically solve Eq. \ref{eq:set-cover}, agent $i$ continuously selects an agent $k$ from $\mathcal{N}_{i}$ exclusively until all agents are traversed or $\widehat{\mathcal{N}_i}$ is found. For $\forall j \in \mathcal{N}_{i}^k$, it computes the estimate $\widetilde{x}_{i}^j$  and sends it to $j$ (Lines \ref{line:network}-\ref{line:network_end}). The complexity of the approximate method is $O(|\mathcal{N}_{i}|d)$. Then, agent $i$ randomly selects a neighbor $j^* \in \mathcal{N}_{i}$ and updates its local noised estimate  (Lines \ref{line:sample}-\ref{line:update}).
% \tianwei{Need more descriptions about the while loop of 8 lines. The process is complicated, but the description now is too simple.}
If there are still uncovered neighbors, $i$ sends its local estimate to those neighbors (Lines \ref{line:if}-\ref{line:if_end}). After $T$ iterations, Algorithm \ref{alg:synchronous} returns the final differentially private DL model.

\begin{algorithm}[t!]
    \SetAlgoLined
    \SetKwInOut{Input}{Input}
    \SetKwInOut{Output}{Output}
    \Input{Initial estimate $x_0$, initial budget $\sigma_{0,i}$, learning rate $\lambda$, number of iterations $T$}
    Send $\sigma_{0, i}$ to $\mathcal{N}_i$ and receive $\{\sigma_{0,j}\}_{j \in \mathcal{N}_i}$ \label{line:broadcast}\;
    \ForEach{$j \in \mathcal{N}_i$}{
        $\mathcal{N}_i^j \gets \mathcal{N}_i \backslash (j\bigcup \mathcal{N}_j)$
    }
    $\bar{g}(x_0, \xi_s)$ ~$\gets$~ Compute the clipped gradient\label{line:initial}\;
    $\widetilde{x}_{i} \gets x_0 - \lambda\bar{g}(x_0, \xi_s) + G(\sigma_{0,i}^2C^2)$\;
    Send $\widetilde{x}_{i}$ to its neighbors\label{line:initial_end}\;
    \For{$t \in [0, T)$}{
        $\sigma_{t, i} \gets \texttt{Decay}(\sigma_{0, i}, t)$ \label{line:decay}\;
        % Receive $\{\widetilde{x}_{j}^i\}_{j \in \mathcal{N}_i^r}$,
        % $\mathcal{N}_i^r \gets \mathcal{N}_i$ for synchronous \label{line:response} \tcp*{$\mathcal{N}_i^r \gets$ response neighbors for asynchronous}
        $\bar{g}(\widetilde{x}_{i}, \xi_s)$ ~$\gets$~ Compute the clipped gradient\;
        $\mathcal{N}_i^* \gets \mathcal{N}_i$ and $\mathcal{N}_i^{\ddagger} \gets \mathcal{N}_i$ \label{line:network}\;
        \While{$\mathcal{N}_i^* \neq \emptyset$ and $\mathcal{N}_i^{\ddagger} \neq \emptyset$}{
            Randomly select $k \in \mathcal{N}_i^{\ddagger}$ and $\mathcal{N}_i^{\ddagger} \gets \mathcal{N}_i^{\ddagger} \backslash k$\;
            $\sigma_{t, i}^j \gets \sqrt{\sigma_{t, i}^2 - (1- \alpha)^2\sigma_{t, k}^2}$, where $\sigma_{t, k} = \texttt{Decay}(\sigma_{0, k}, t)$\;
            \ForEach{$j \in \mathcal{N}_i^* \cap \mathcal{N}_i^{k}$}{
                Update the estimate $\widetilde{x}_{i}^j \gets \alpha\widetilde{x}_{i} + (1-\alpha)\widetilde{x}_{k}^i  - \lambda\bar{g}(\widetilde{x}_{i}, \xi_s) + G({\sigma_{t, i}^j}^2C^2)$\;
                Send $\widetilde{x}_{i}^j$ to agent $j$\;
            }
            $\mathcal{N}_i^* \gets \mathcal{N}_i^*\backslash \mathcal{N}_i^{k}$\label{line:network_end}\;
        }
        Randomly select an agent $j^*$ from $\mathcal{N}_i$ \label{line:sample}\;
        Update the local estimate $\widetilde{x}_{i} \gets \alpha\widetilde{x}_{i} + (1-\alpha)\widetilde{x}_{j^*}^i  - \lambda\bar{g}(\widetilde{x}_{i}, \xi_s) + G(\sigma_{t, i}^2C^2)$ \label{line:update}\;
        \If{$\mathcal{N}_i^* \neq \emptyset$ \label{line:if}}{
            \ForEach{$j \in \mathcal{N}_i^*$}{
                $\widetilde{x}_{i}^j \gets \widetilde{x}_{i}$ and send $\widetilde{x}_{i}^j$ to agent $j$ \label{line:if_end}\;
            }
        }
    }
    \Return $\widetilde{x}_{i}$
    \caption{Differentially private decentralized learning for agent $i$ in the synchronous mode.}\label{alg:synchronous}
\end{algorithm}

\subsection{Asynchronous Topology-aware Protocol}
% problems
Although the synchronous training in Algorithm \ref{alg:synchronous} can realize the proposed strategies to improve the model usability, it still leaves some spaces for further optimization. First, each agent only selects part of the received parameters for update while discarding the rest. So it is not necessary to collect the estimates from all the neighbors, which can cause extra communication cost and waiting latency. Second, as introduced in Section \ref{sec:strategy1}, when agent $j$ connects to every neighbor of agent $i$, $i$ has to add full-scale noise to the parameter sent to $j$. The topology-aware optmization will lose effectiveness when there are a lot of such ($i$, $j$) pairs.

%adopts the proposed strategies to improve the model utility, it cannot be applied into asynchronous systems and still has communication and performance drawbacks. For example, some neighbors of an agent may not participate in certain iterations in the asynchronous setting. The agent cannot receive estimates from these neighbors. Thus, Algorithm \ref{alg:synchronous} cannot be adopted to the asynchronous mode. Algorithm \ref{alg:synchronous} forces an agent to receive estimates from and sends its local estimate to all its neighbors. However, the agent only randomly selects one neighbor for aggregation, which increases the unnecessary communication burden. Besides, although the topology-aware strategy works well for agents with distinctive communication topologies, Algorithm \ref{alg:synchronous} will lose effectiveness for agents with similar topologies. For instance, if agents $i$ and $j$ are neighbors and $\mathcal{N}_i \subseteq \mathcal{N}_j$, the noise added to the $i$'s estimates for $j$ cannot be reduced using the topology-aware strategy because the estimates of $i$ are also known to $j$, whose noise cannot contribute to the noise scale of the aggregated estimate for $j$.

% solution
To overcome the above limitations, we design a novel topology-aware protocol for asynchronous training. At every iteration, each agent only pairs with one of its neighbors for parameter exchange and update. An extra checking is conducted to guarantee that the paired agents are qualified for the topology-aware noise reduction: two agents cannot be paired twice in two consecutive iterations. Otherwise, the aggregated parameter selected by one agent in the previous iteration is not a secret to the other agent, and full-scale noise has to be added in this iteration. Hence the topology-aware noise reduction cannot be applied. Specifically, during the training process, agent $i$ randomly selects a neighbor $j$ which is different from the paired neighbor in the previous iteration. Then $i$ asks $j$'s availability for parameter sharing. If $j$ agrees to collaborate with $i$ in this iteration, they exchange parameters with the reduced noise scale and update the models following Eq. \ref{eq:aggre-red}. If agent $i$ cannot find a qualified or available pair at this iteration, it will update its estimate by itself.

%To tolerate the asynchronous setting and improve the performance, we design a novel learning protocol by randomly pairing the agents right before each iteration. We allow each agent communicates at most one agent at each iteration. During the training process, An agent first communicate its neighbors individually to confirm whether they are available for pairing in current iteration. Once two neighbor agents choose each other for aggregation, they send their estimates to each other and update their local estimates. Considering the topology-aware noise reduction, we restrict an agent to not pair with another agent twice in a row. If an agent cannot find a pair at one iteration, it will update its estimate itself.

Algorithm \ref{alg:asynchronous} describes the detailed steps of our asynchronous learning protocol. Similar to Algorithm \ref{alg:synchronous}, it takes the same parameters as input, and updates the initial estimate (Lines \ref{line:abroadcast}-\ref{line:ainitial}). At the $t$-th iteration, agent $i$ passively waits for pairing request from other neighbors. Meanwhile, it also actively searches in a random order for a neighbor that is not paired with it in the previous iteration (Lines \ref{line:search-begin}-\ref{line:search-end}). If the selected agent $j$ is available or $i$ receives a pairing request from $j'$, $i$ stops searching and pairs with $j_i^{t+1} = j$ ($j'$). Agent $i$ sends $\widetilde{x}_{i}$ to and receives $\widetilde{x}_{j_i^{t+1}}$ from $j_i^{t+1}$ (Line \ref{line:asend}). Then, $i$ adopts time-aware and topology-aware strategies to reduce the noise scale (Lines \ref{line:adecay}-\ref{line:atopology}) and updates its estimates (Lines \ref{line:agradient}-\ref{line:aupdate}). Otherwise, $i$ only utilizes the time-aware noise decay to update its estimate locally (Lines \ref{line:sdecay}-\ref{line:supdate}).

% and gets the response neighbors $\mathcal{N}_{i}^r$. Then it will continue the same procedure as synchronous training, with only the $\mathcal{N}_{i}^r$ neighbors in this iteration.

% algorithm

\begin{algorithm}[t!]
    \SetAlgoLined
    \SetKwInOut{Input}{Input}
    \SetKwInOut{Output}{Output}
    \Input{Initial estimate $x_0$, initial budget $\sigma_{0,i}$, learning rate $\lambda$, number of iterations $T$}
    Send $\sigma_{0, i}$ to $\mathcal{N}_i$ and receive $\{\sigma_{0,j}\}_{j \in \mathcal{N}_i}$ \label{line:abroadcast}\;
    $\bar{g}(x_0, \xi_s)$ ~$\gets$~ Compute the clipped gradient \;
    $\widetilde{x}_{i} \gets x_0 - \lambda\bar{g}(x_0, \xi_s) + G(\sigma_{0,i}^2C^2)$ \label{line:ainitial}\;
    $j_i^0 \gets None$\;
    \For{$t \in [0, T)$}{
        $\sigma_{t, i} \gets \texttt{Decay}(\sigma_{0, i}, t)$\;
        $\mathcal{N^*} \gets \mathcal{N}_i/j_i^t$ \label{line:search-begin} \;
        \While{$\mathcal{N^*}\neq \emptyset$}{
            Randomly select $j \in \mathcal{N^*}$ and $\mathcal{N^*} \gets \mathcal{N^*} \backslash j$\;
            Ask if $j$ is available for pairing up\;
            \If{$j$ is available}{
                $j_i^{t+1} \gets j$\;
                $\mathcal{N^*} \gets \emptyset$\;
            }
            \If{receive pairing request from $j'$}{
                $j_i^{t+1} \gets j'$\;
                $\mathcal{N^*} \gets \emptyset$ \label{line:search-end} \;
            }
        }
%        search $\mathcal{N}_i$ in a random order for $j_i^{t+1} \neq j_i^{t}$ which is ready for pairing up \label{line:apair}\;
        \If{$j_i^{t+1}$ is found}{
            Send $\widetilde{x}_{i}$ and receive $\widetilde{x}_{j_i^{t+1}}$ to/from $j_i^{t+1}$ \label{line:asend}\;
            $\sigma_{t, j_i^{t+1}}\gets \texttt{Decay}(\sigma_{0, j_i^{t+1}}, t)$ \label{line:adecay}\;
            $\sigma_{t, i} \gets \sqrt{\sigma_{t, i}^2 - (1- \alpha)^2\sigma_{t, j_i^{t+1}}^2}$\label{line:atopology}\;
            $\bar{g}(\widetilde{x}_{i}, \xi_s)$ ~$\gets$~ Compute the clipped gradient\label{line:agradient}\;
            Update the local estimate $\widetilde{x}_{i} \gets \alpha\widetilde{x}_{i} + (1-\alpha)\widetilde{x}_{j_i^{t+1}}^i  - \lambda\bar{g}(\widetilde{x}_{i}, \xi_s) + G(\sigma_{t, i}^2C^2)$ \label{line:aupdate}\;
        }\Else{
            $\sigma_{t, i}\gets \texttt{Decay}(\sigma_{0, i}, t)$ \label{line:sdecay}\;
            $\bar{g}(\widetilde{x}_{i}, \xi_s)$ ~$\gets$~ Compute the clipped gradient\;
            Update the local estimate $\widetilde{x}_{i} \gets \alpha\widetilde{x}_{i} + (1-\alpha)\widetilde{x}_{j_i^{t+1}}^i  - \lambda\bar{g}(\widetilde{x}_{i}, \xi_s) + G(\sigma_{t, i}^2C^2)$ \label{line:supdate}\;
        }

    }
    \Return $\widetilde{x}_{i}$
    \caption{Differentially private decentralized learning for agent $i$ in the asynchronous mode.}\label{alg:asynchronous}
\end{algorithm}

\section{Theoretical Analysis}
\label{sec:analysis}
We perform a formal analysis about Algorithms \ref{alg:synchronous} and \ref{alg:asynchronous} from the aspects of privacy and efficiency.

\subsection{Proof of DP}
First, we prove Algorithm \ref{alg:synchronous} is differentially private by carefully choosing the initial noise parameters. We track the accumulated privacy loss of the training process using R\'enyi DP \cite{mironov2017renyi}, which is a natural relaxation of DP based on the R\'enyi divergence and allows tighter analysis of tracking cumulative privacy loss and ensures a sublinear loss of privacy as a function of the number of iterations.
% \tianwei{This is confused to me. 1. What is Renyi DP? need description. 2. Why do you compare with Renyi DP? how can you prove your DP by such comparisons?}

\begin{theorem}\label{theorem}
    Let the number of iterations be $T$. For any decentralized system $\mathcal{G}$ and every agent $i \in V$, the randomized mechanisms in Algorithm \ref{alg:synchronous} is ($\epsilon_i, \delta_i$)-DP if we choose
    \begin{equation}\label{eq:theorem}
        \sigma_{0,i} \geq \frac{8\sqrt{T\log\frac{1}{\delta_i}\log\frac{1.25}{\delta_i}} }{\epsilon_i |D_i|}
    \end{equation}
\end{theorem}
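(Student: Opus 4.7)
The plan is to reduce the $T$-round guarantee for agent $i$ to a per-iteration Gaussian-mechanism analysis plus a composition/subsampling bound. The view of any honest-but-curious neighbor is just the sequence of messages $\widetilde{x}_i^j$ that $i$ releases (together with the initial-round message), so proving $(\epsilon_i,\delta_i)$-DP of this view against $D_i$ reduces to (i) showing each per-iteration release is differentially private with an appropriate per-step budget and (ii) composing over $T$ iterations using R\'enyi DP. The sensitivity of a single step comes from the gradient clipping threshold $C$ and uniform sampling of $\xi_s$ from $D_i$, which via standard privacy amplification contributes the $1/|D_i|$ factor appearing in Eq. \ref{eq:theorem}.

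The subtle step is verifying that the topology-aware noise reduction does not break DP. For any neighbor $j$, the message $\widetilde{x}_i^j=\alpha\widetilde{x}_i+(1-\alpha)\widetilde{x}_k^i-\lambda\bar{g}(\widetilde{x}_i,\xi_s)+G_i^j$ carries only the small noise $G_i^j\sim\mathcal{N}(0,(\sigma_{t,i}^j)^2 C^2)$, but the aggregate $\widetilde{x}_k^i$ embeds the Gaussian noise $G_k\sim\mathcal{N}(0,\sigma_{t,k}^2 C^2)$ that $k$ injected when producing its message to $i$. The construction $k\in\mathcal{N}_i^j=\mathcal{N}_i\setminus(j\cup\mathcal{N}_j)$ ensures $j\notin\mathcal{N}_k\cup\{k\}$, so under the non-collusion assumption $j$ never observes $\widetilde{x}_k^i$ directly from $k$ and has no information about the realization of $G_k$. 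Conditioning on everything else in $j$'s view, the residual randomness in $\widetilde{x}_i^j$ is the sum of two independent Gaussians with total variance $[(1-\alpha)^2\sigma_{t,k}^2+(\sigma_{t,i}^j)^2]C^2=\sigma_{t,i}^2 C^2$ by Eq. \ref{eq:aggre-red}(b), recovering the full-scale noise. Thus from every neighbor's perspective the single-iteration mechanism behaves as a Gaussian mechanism with standard deviation $\sigma_{t,i}C$, which is $(\epsilon'_i,\delta'_i)$-DP provided $\sigma_{t,i}$ exceeds the usual Gaussian-calibration threshold on the order of $\sqrt{2\log(1.25/\delta'_i)}/\epsilon'_i$.

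Given per-iteration DP, I would compose across $T$ rounds using the sampled-Gaussian R\'enyi DP bound, then convert RDP back to $(\epsilon_i,\delta_i)$-DP; after solving for $\sigma_{0,i}$ and noting that the decay factor in Eq. \ref{eq:noise-decay}(b) only makes the later rounds easier, this yields a bound of the form of Eq. \ref{eq:theorem} with a small universal constant absorbed into the leading $8$. The main obstacle is rigorously formalizing the adversary's view in the topology-reduction step, because $G_k$ is reused across multiple releases within round $t$ and potentially correlates messages across subsequent rounds: one must show that across all $T$ iterations and all $|\mathcal{N}_i|-1$ neighbors, the conditional-independence argument still isolates the private sensitivity with respect to $D_i$. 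The clean route is to condition, round by round, on every piece of $j$'s transcript that is a function of data other than $D_i$, verify that the residual randomness is Gaussian with variance at least $\sigma_{t,i}^2 C^2$, and only then invoke composition; the non-collusion hypothesis and the definition of $\mathcal{N}_i^j$ in Eq. \ref{eq:set-cover} are what make this factorization go through.
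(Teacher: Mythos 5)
Your proposal follows essentially the same route as the paper's own proof: a per-iteration Gaussian-mechanism bound with sensitivity governed by the clipping threshold $C$ and the $1/|D_i|$ factor, R\'enyi-DP composition over $T$ iterations to obtain the constant $8$ in Eq.~\eqref{eq:theorem}, and the observation that for $k \in \mathcal{N}_i^j$ the unobserved noise $(1-\alpha)G_k$ combines with $G_i^j$ by Gaussian additivity to restore the full-scale variance $\sigma_{t,i}^2 C^2$ from neighbor $j$'s perspective. If anything, your explicit conditioning on the adversary's transcript and your flag about $G_k$ being reused across releases is more careful than the paper's argument, which asserts the additivity step without formalizing the neighbor's view.
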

\begin{proof}
We prove the theroem in the synchronous mode and ignore the time-aware noise decay strategy since it does not incur any additional privacy loss \cite{yu2019differentially}. We clip the gradients in $l_2$ norm of $C$ and assume the privacy budget $\epsilon_{i}'$ is the same at each iteration. According to the Gaussian mechanism \cite{dwork2006our}, the update rule in Line \ref{line:update} is ($\epsilon_{i}', \delta_i$)-DP at one iteration if we choose $$\sigma_{0, i} \geq \frac{\sqrt{2\log\frac{1.25}{\delta_i}}}{\epsilon_{i}'|D_i|}.$$

Using R\'enyi composition theorem \cite{mironov2017renyi}, our new update rule is ($\epsilon_i, \delta_i$)-DP after $T$ iterations if we choose $$\epsilon_i = 4\epsilon_i'\sqrt{2T\log\frac{1}{\delta_i}}.$$ Then, we have $$\epsilon_i' = \frac{\epsilon_i}{4\sqrt{2T\log\frac{1}{\delta_i}}}.$$ Combining the above equations, we conclude that our update rule in Line \ref{line:update} is ($\epsilon_i, \delta_i$)-DP if we choose $\sigma_{0,i}$ such that

\begin{equation}
\label{eq:prove2}
\sigma_{0,i} \geq \frac{8\sqrt{T\log\frac{1}{\delta_i}\log\frac{1.25}{\delta_i}} }{\epsilon_i |D_i|}
\end{equation}

We have proven that the local estimate of agent $i$ is differentially private during the training process. Then, we prove that for $\forall j \in \mathcal{N}_i$, the estimates generated for $j$ is also differentially private. Let $k, (k,j) \notin E$ be the selected agent for generating estimate for $j$. Since $j$, $k$ are not directly connected, the noise of $\widetilde{x}_{k}^i$ can be used as a random component to guarantee the DP of $i$ against $j$. Thus, because all agents generate noise independently, the noise scale for $j$ should satisfy

\begin{equation}
\label{eq:prove1}
G(\sigma_{0,i}) = G(\sigma_i^j) + (1-\alpha)G_{0,k}
\end{equation}

According to the additivity of Gaussian distribution, the noise parameter for the estimate for $j$ is $$\sigma_{i}^j = \sqrt{\sigma_{0, i}^2 - (1- \alpha)^2\sigma_{0, k}^2}.$$ Therefore, in Algorithm \ref{alg:synchronous}, the estimates generated for the neighbors of agent $i$ are also differentially private.
\end{proof}

The DP of Algorithm \ref{alg:asynchronous} can also be analyzed in a similar way. Note that an agent cannot pair with another agent twice in a row. Therefore, even the agents in a decentralized system are fully connected, the topology-aware noise reduction still works in such situation, where Algorithm \ref{alg:synchronous} fails.

% use ratio to describe the reduce of noise
\subsection{Efficiency Analysis of \AlgName}
\rtwo{Our protocols can reduce the noise and thus improve the usability of the trained models using the proposed \AlgName algorithm when considering the communication topology. Here, we theoretically analyze the efficiency of \AlgName by comparing the amount of added noise with and without \AlgName. Without loss of generality, we assume $$\sigma =\sigma_{t,i} = \sigma_{t, j} \ \text{for}  \ \forall i, j \in V \ \text{and} \ (i, j) \in E.$$ Let $\sigma_{t, i}^j$ be the noise parameter of $\widetilde{x}_{i}^j$ at iteration $t$. According to the proposed topology-aware noise reduction strategy,
\begin{align}\nonumber
    \sigma_{t, i}^j &= \sqrt{\sigma_{t, i}^2 - (1- \alpha)^2\sigma_{t, k}^2} \\
                    &= \sigma\sqrt{2\alpha -  \alpha^2}.
\end{align}
}

\rtwo{Compared with the full-scale noise parameter, the noise added to $\widetilde{x}_{i}^j$ is reduced by a factor of $\sqrt{2\alpha -  \alpha^2}$. We can observe that $\sigma_{t, i}^j$ decreases as $\alpha \in (0, 1)$ decreases.  When $\alpha$ approaches 0, the noise of the estimates that agent $i$ sends to/receives from its neighbors would be significantly reduced. Thus, the usability of the trained models would be theoretically improved because of the decrease of the added noise.}

\rtwo{In synchronous mode (Algorithm \ref{alg:synchronous}), agent $i$ can always reduce the noise of the estimates for its neighbor $j$ using the \AlgName if there exists an agent that connects to agent $i$ and cannot communicate with $j$ directly, i.e., $$\mathcal{N}_i / \mathcal{N}_j \neq \emptyset.$$ For the asynchronous settings (Algorithm \ref{alg:asynchronous}), \AlgName works if it finds a pairing neighbor during the iteration. Therefore, the agents in both synchronous and asynchronous modes can theoretically improve the utility of their trained models using our \AlgName.}

\section{Experiments}\label{sec:experiments}
\subsection{Implementation and Experimental Setup}
\bheading{Dataset and DNN model.}
We conduct experiments mainly on the MNIST dataset. It consists of a training set of 60k samples and a test set of 10k samples. We consider a fully connected network with a hidden layer of size 100 for image classification. We set a fading learning rate $\lambda$ with the initial value of 0.05. Our solution is general and can be applied to other DNN tasks as well (as demonstrated in Section \ref{sec:cifar}).

For the implementation of the decentralized system, we consider a network consisting of 30 agents, and each agent connects to others with the probability of 0.2 (connection rate). This decentralized system is guaranteed to be fully connected, i.e., there exists at least one path connecting two arbitrary agents. The training set of each agent is independent and identically distributed with the same size. In the synchronous mode, all 30 agents participate at each training iteration. In the asynchronous mode, we assume 10\% of random agents will not be involved at each iteration.
%We simulate the decentralized learning by running the agents serially at each iteration.

Without loss of generality, the agents have same privacy budget (1.0) and relaxation hyper-parameter ($10^{-5}$). We assume the agents reach the consensus on
% the parameters of
the time-aware noise decay strategy, where $\gamma$ and $period$ are 0.9 and 1000, respectively. We clip the gradients in $l_2$ norm of 4.0.
% \tianwei{I feel it is necessary to include more details for the DP details in this paragraph, e.g., what types of noises did you add? what is the value of budget? Try to give more details regarding the two strategies}

\bheading{Baselines and metrics.}
We consider different decentralized learning algorithms in our experiments:

\begin{itemize}
    \item \emph{No Noise}: the agents exchange parameters without DP protection.
	\item \emph{Li18}: the DP-SGD algorithm proposed by Li et al. \cite{li2018differentially}.
	\item \emph{Li18+MA}: we integrate Li18 with moments account \cite{abadi2016deep} to track the accumulated privacy loss.
	\item \ronetwo{\emph{UDP}: the user-level DP-SGD algorithm proposed by Wei et al. \cite{wei2021user}}.
	\item \ronetwo{\emph{Optimal}: the optimal DP analysis for SGD proposed by Asoodeh et al. \cite{asoodeh2021three}.}
	\item \emph{Proposed}: our proposed learning protocols.
\end{itemize}

It is worth noting the first five solutions cannot be applied to the asynchronous mode directly. For fair comparisons, we modify their update rules as Eq. \ref{eq:aggre-dp} to follow our learning protocol for asynchronous learning. For each algorithm, we measure the testing accuracy of each agent's model at every iteration during the training, and report the average accuracy.

\subsection{Effectiveness of \AlgName}
\label{sec:effective}
We evaluate and compare the performance of those DP-SGD algorithms under different settings in both synchronous and asynchronous modes.

%the three DP-SGD algorithms in both synchronous and asynchronous modes. Specifically, we investigate the effect of $\alpha$ on the algorithms as the number of iteration increases. we also evaluate the relationship between privacy budget and usability.

\bheading{Epoch v.s. accuracy.}
Fig. \ref{fig:sync_acc} illustrates the trend of average testing accuracy in the training process with different $\alpha$ values. First, we observe that our proposed algorithm outperforms all baselines, and is closer to the No Noise case, for different $\alpha$ values and modes. Such advantage is more obvious with a smaller $\alpha$, as the reduced noise is larger. Second, Li18+MA has higher performance than Li18 because of the usage of MA. \ronetwo{With the new DP technique, Optimal outperforms UDP and Li18+MA in most settings.} Different from our solution, the usability of the models from all baselines significantly decreases as $\alpha$ decreases. This is caused by the increase of the noise of the selected estimates.
% \tianwei{this is not clear}.
Third, the model training in synchronous mode converges slightly faster than the one in the asynchronous mode, since each agent can contribute to the model training to accelerate the process.

\bheading{Privacy budget v.s. accuracy.}
We consider the impact of privacy budget on the model accuracy, as shown in Fig. \ref{fig:budget}. We can observe our solution can beat the other DP solutions for different privacy budgets. Besides, when the privacy budget decreases, the model usability decreases, as more noise is required to inject to the estimates. Meanwhile, the advantage of our solution also increases, as the amount of reduced noise increases as well. This indicates that our algorithm is more effective when a small privacy budget is needed.

%The average accuracy of the models of Proposed is almost 0.8 even when the privacy budget is 0.25. As privacy budget decreases, the usability gap between Proposed and the other PD-SGD algorithms increase. This indicates that the strategies of Proposed are effective to improve the usability of DP deep models in decentralized systems.

% We also conduct evaluations on the (1) effectiveness of topology-aware noise reduction strategy, (2) impacts of various connection rates, (3) performance with more $\alpha$ values, and (4) CIFAR10 Dataset. The experimental results give the same conclusion, and can be found in the supplementary material.

\begin{table*}[t!]\centering
	\resizebox{\textwidth}{!}{
		\begin{tabular}{c@{}c@{}c@{}c}
			% \rotatebox{90}{$\ \ \ \ \ \ \ \ \  $Synchronous}&
			\includegraphics[width=0.33\textwidth]{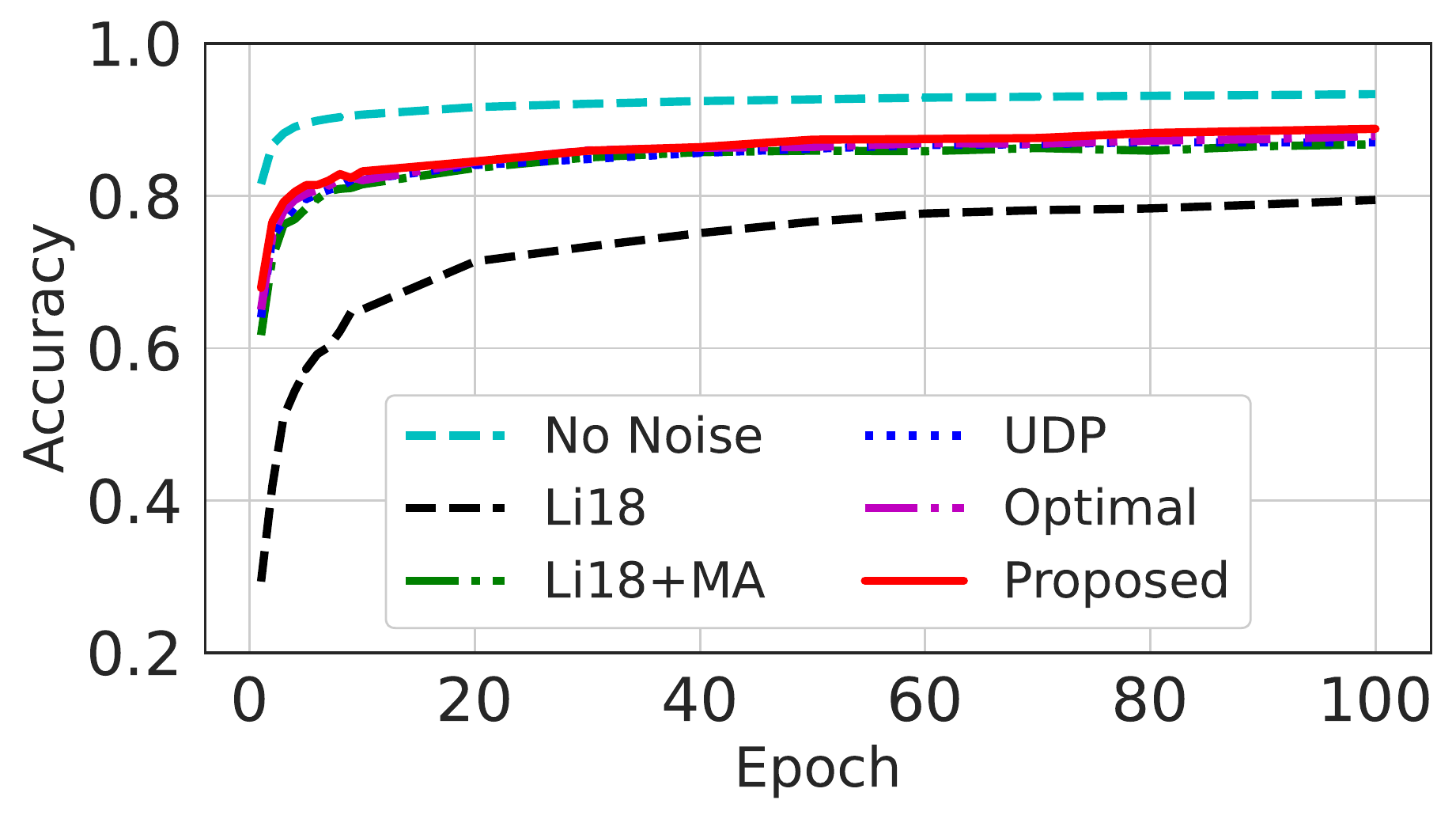}&\includegraphics[width=0.33\textwidth]{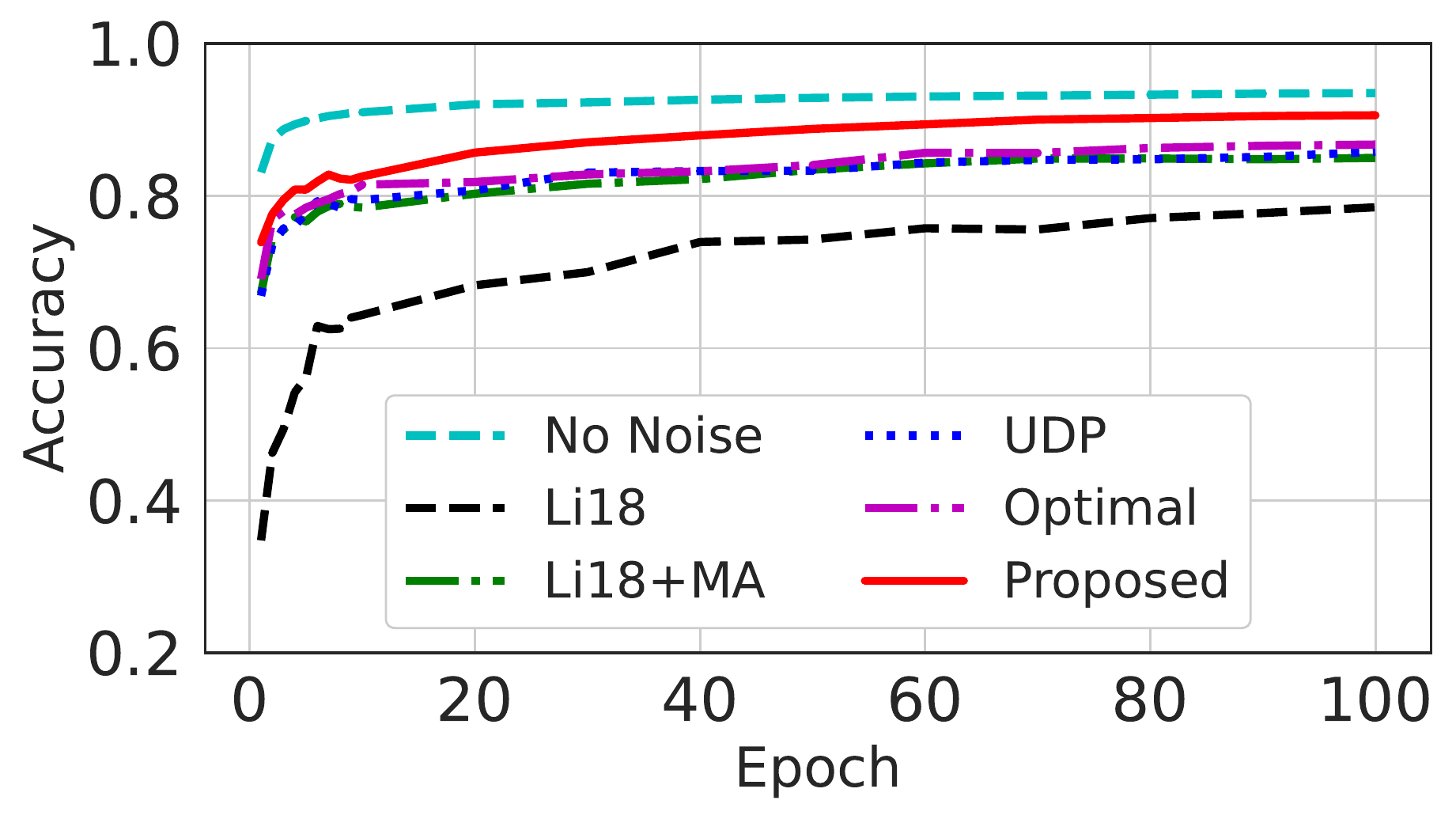}&\includegraphics[width=0.33\textwidth]{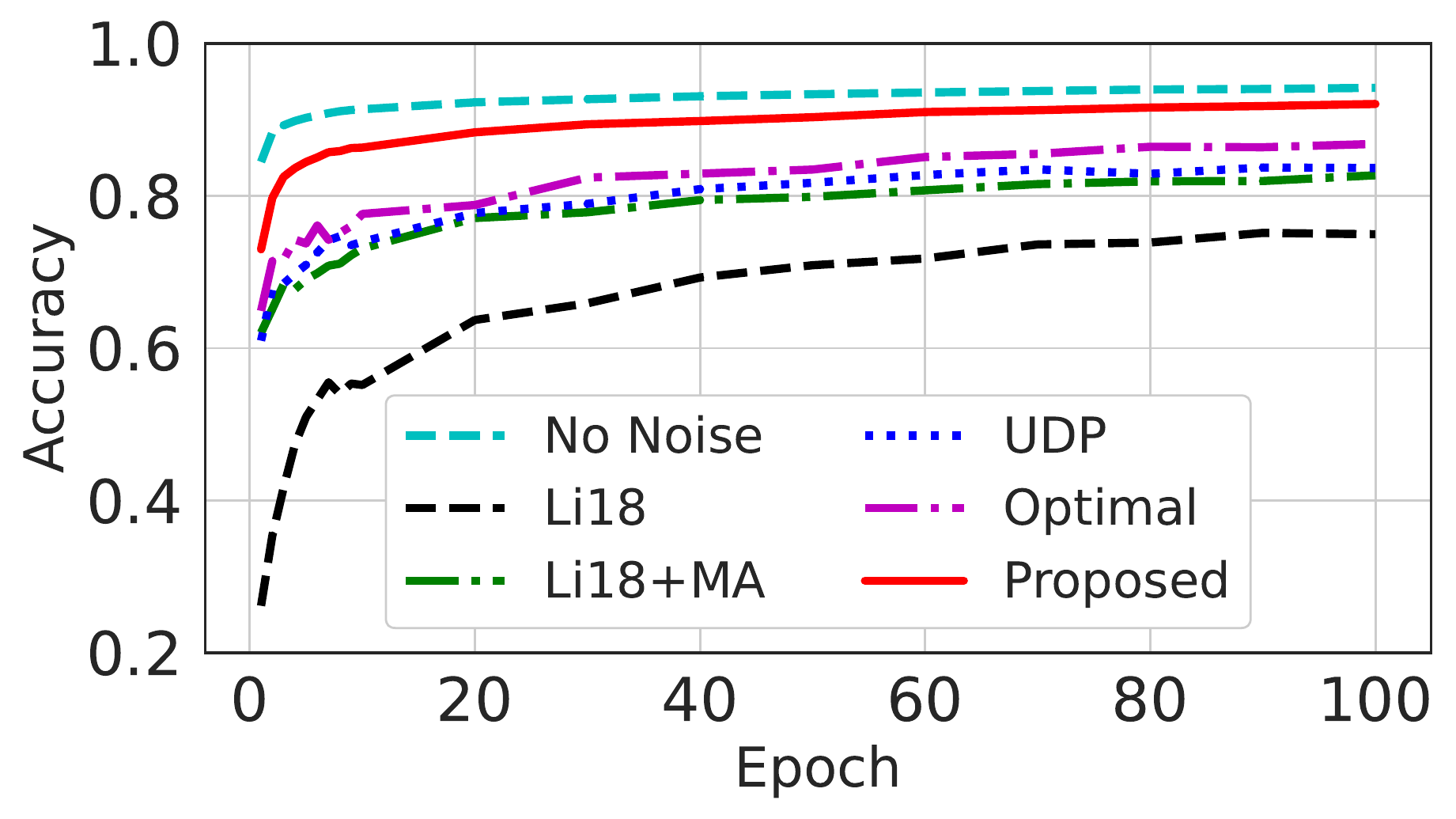}&\includegraphics[width=0.33\textwidth]{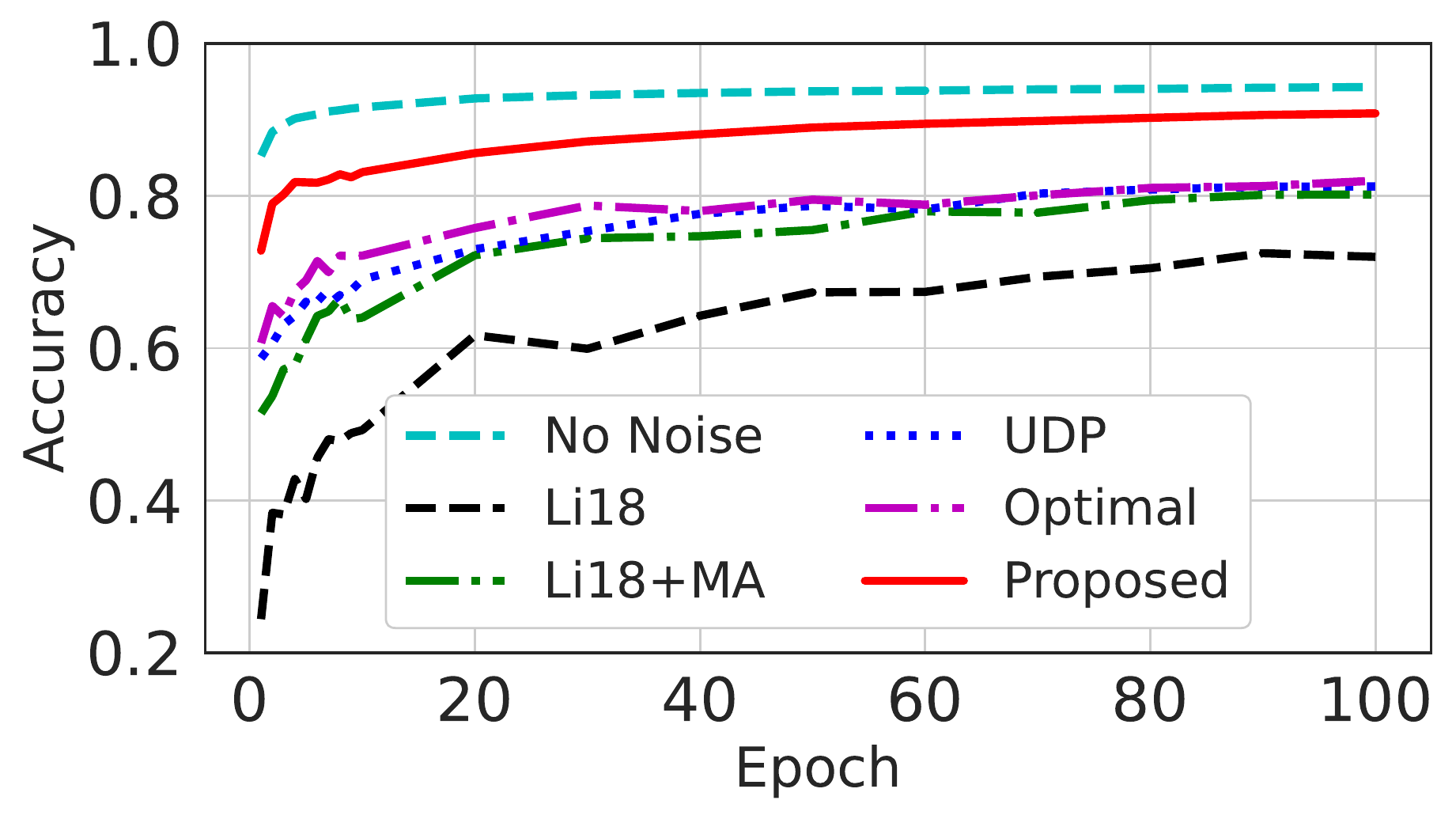}\\
			% \rotatebox{90}{$\ \ \ \ \ \ \ \ \  $Asynchronous}&
			\includegraphics[width=0.33\textwidth]{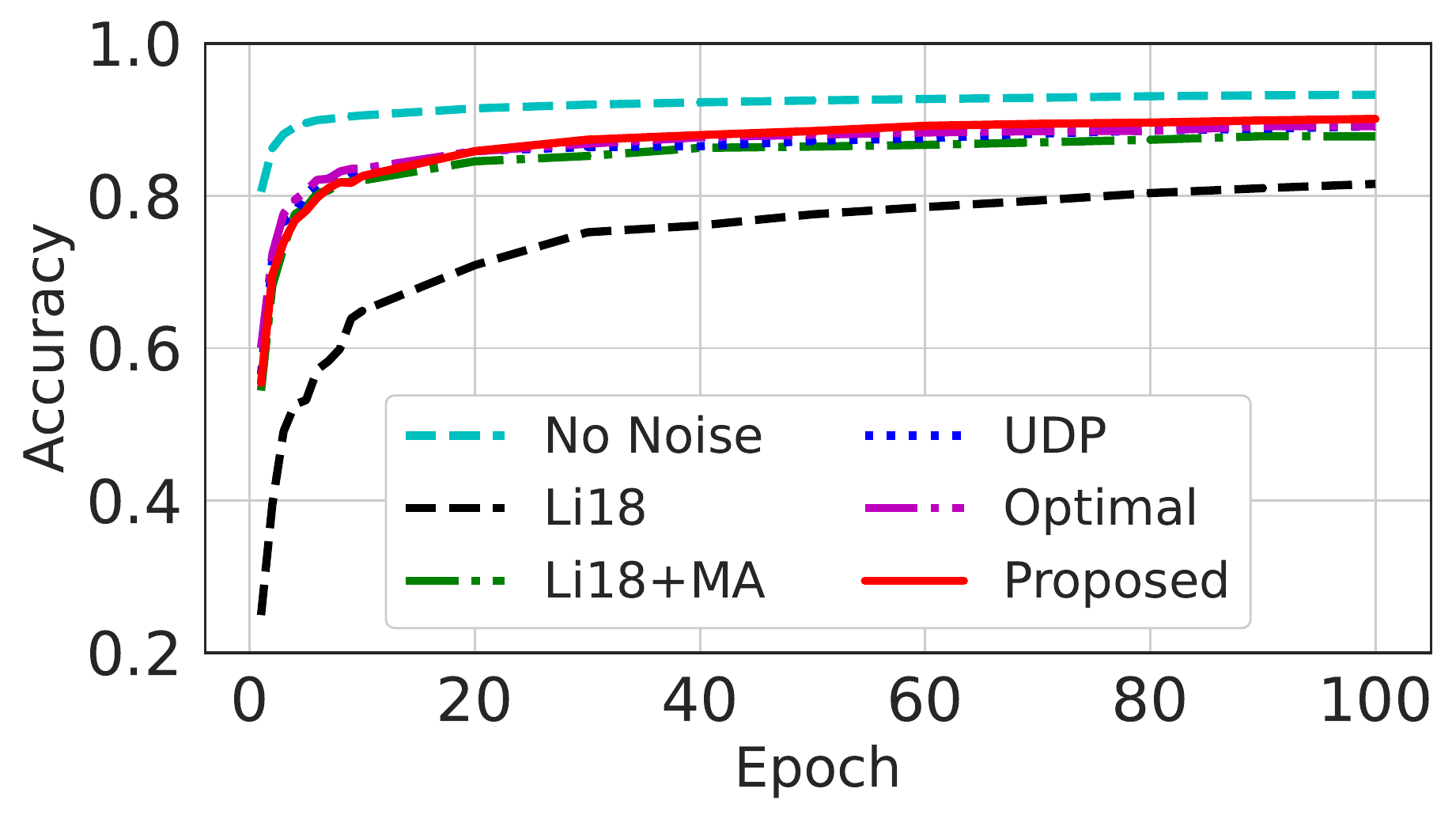}&\includegraphics[width=0.33\textwidth]{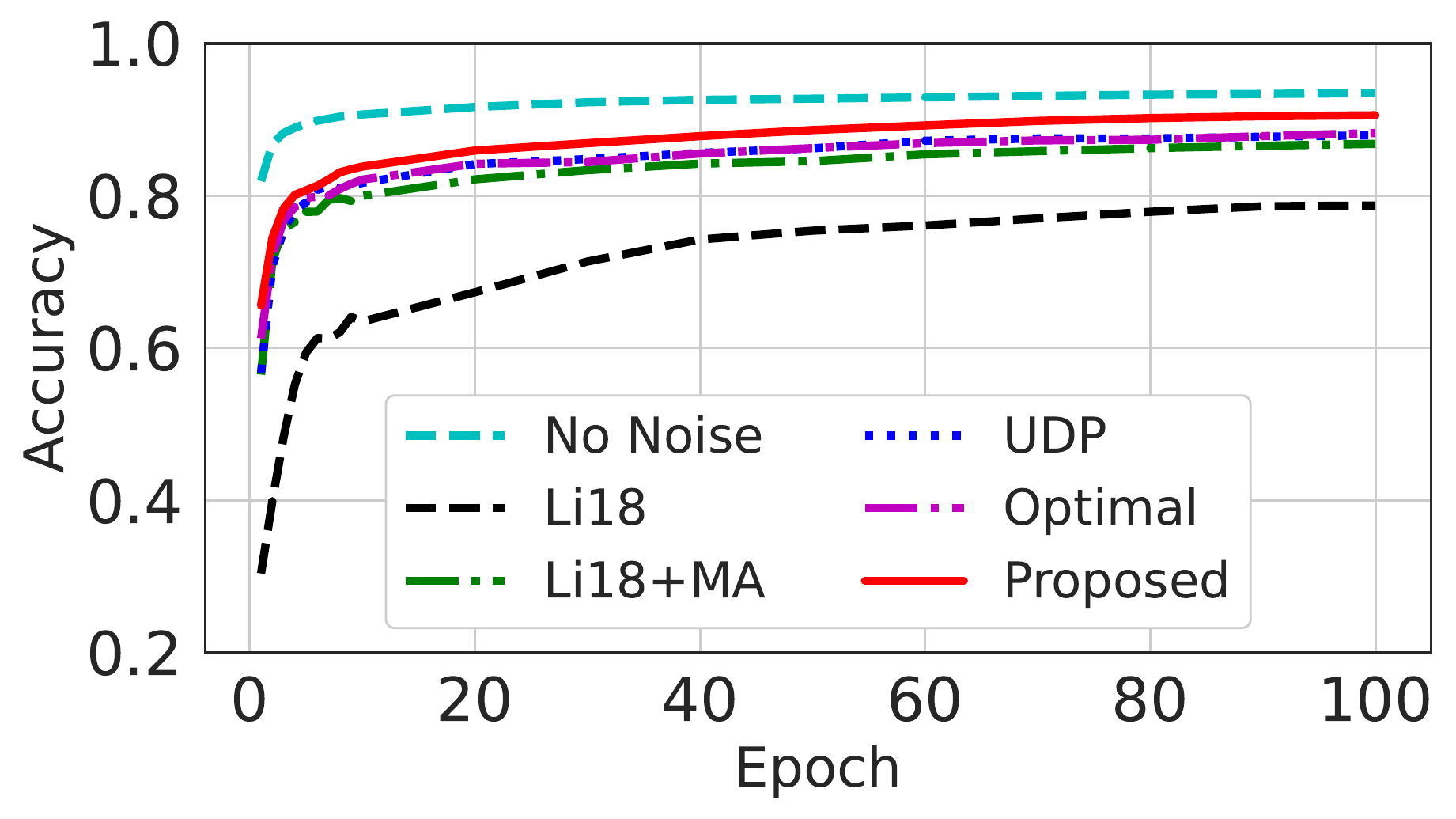}&\includegraphics[width=0.33\textwidth]{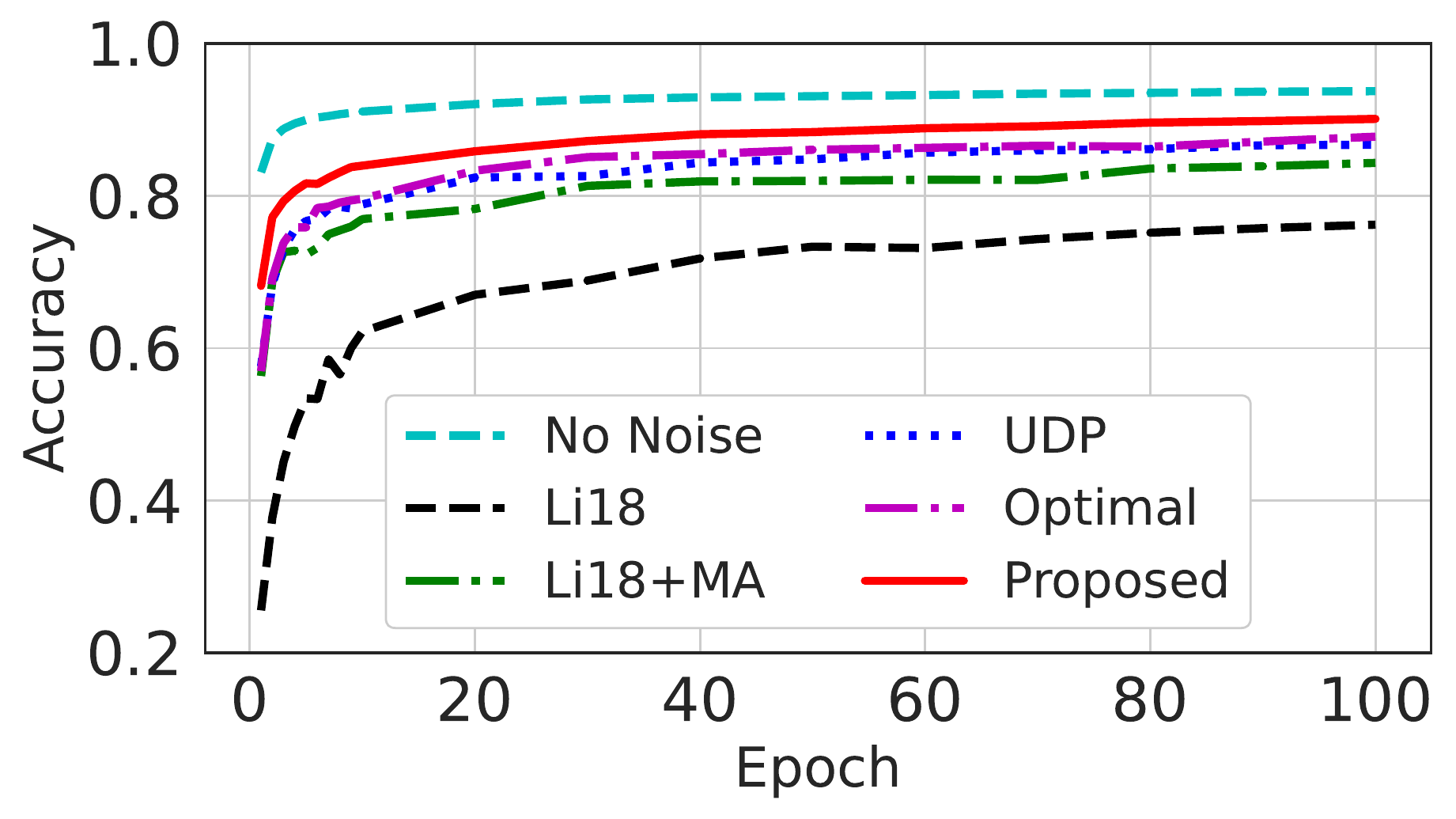}&\includegraphics[width=0.33\textwidth]{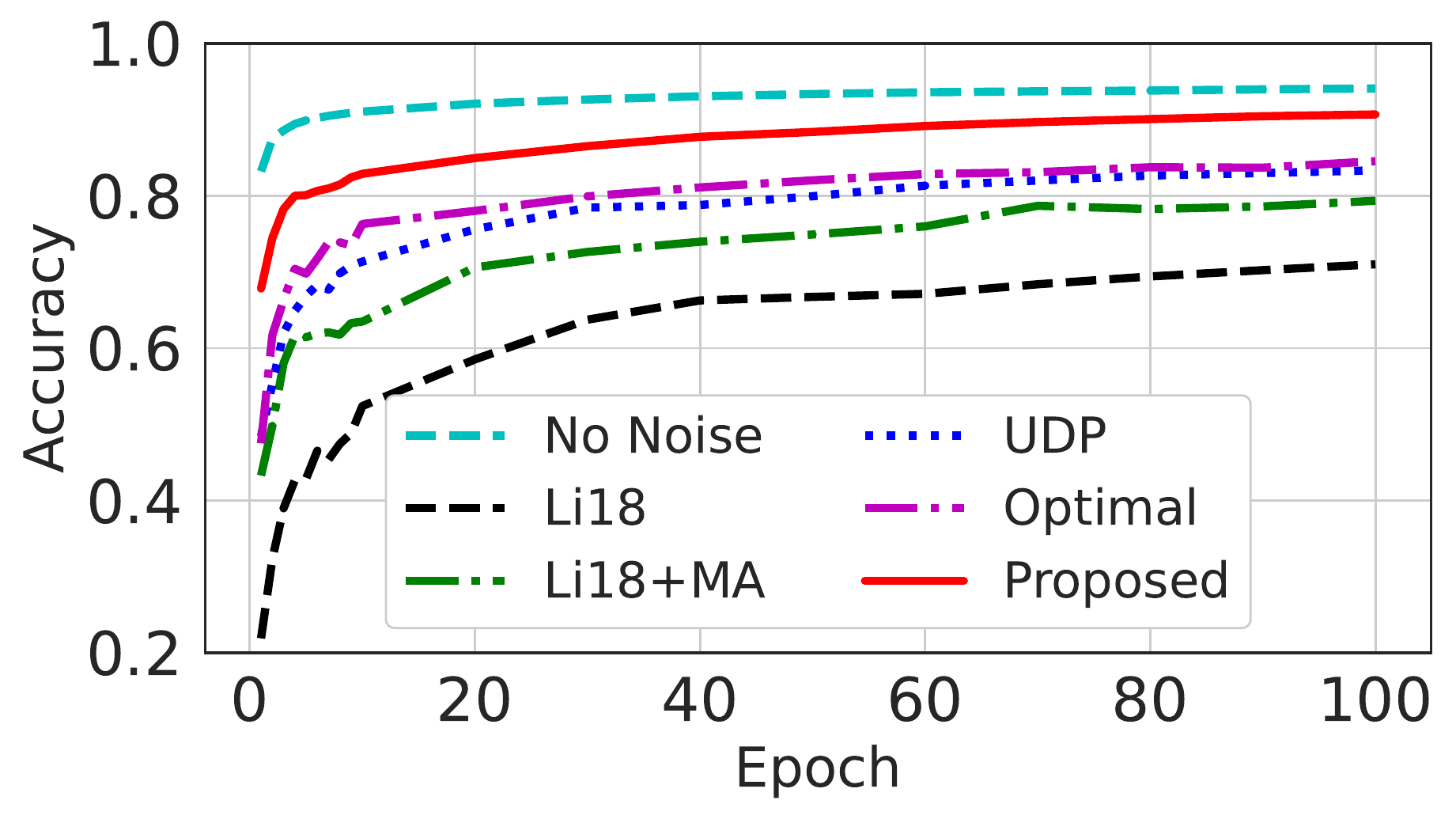}\\
			(a) $\alpha = 0.75$ & (b) $\alpha = 0.5$& (c) $\alpha = 0.25$ &(d) $\alpha = 0.125$
	\end{tabular}}
	\captionof{figure}{\ronetwo{The average accuracy of the agents with different $\alpha$ values under synchronous (first row) and asynchronous (second row) settings.}}\label{fig:sync_acc}
\end{table*}

\subsection{Impact of System Configurations}
\noindent\textbf{Connection rate.} We set the connection rate of the decentralized network as 0.2 in the previous experiments. Our proposed algorithm is effective under other connection rates as well. To validate this, we measure and compare the performance of different DP-SGD algorithms with the connection rates of 0.1 and 0.4. Without loss of generality, we consider the synchronous mode and set $\alpha$ as 0.25. Figure \ref{fig:rate} shows the average accuracy of the agents as the training epoch increases. We observe that the performance of each algorithm does not change with different connection rates. The underlying reason may be that although the number of an agent's neighbors is changed with the connection rate, the agent still selects one estimate for updates at each iteration. Then the training result will not be changed either. As such, our proposed solution can exhibit advantages over prior works under various network connection rates.

\noindent\ronetwo{\textbf{Number of agents.} We now investigate the impact of the number of agents on the performance of decentralized learning systems. We conduct experiments on decentralized systems with 40 and 50 agents in the synchronous mode. The experimental results are shown in Figure \ref{fig:number}. We observe that the accuracy of the trained models only slightly increases with more agents involved, indicating that the number of agents has a small positive impact on the decentralized systems.}

\begin{figure}[t]
	\centering
	\subfigure[Synchronous]{\includegraphics[width=0.49\columnwidth]{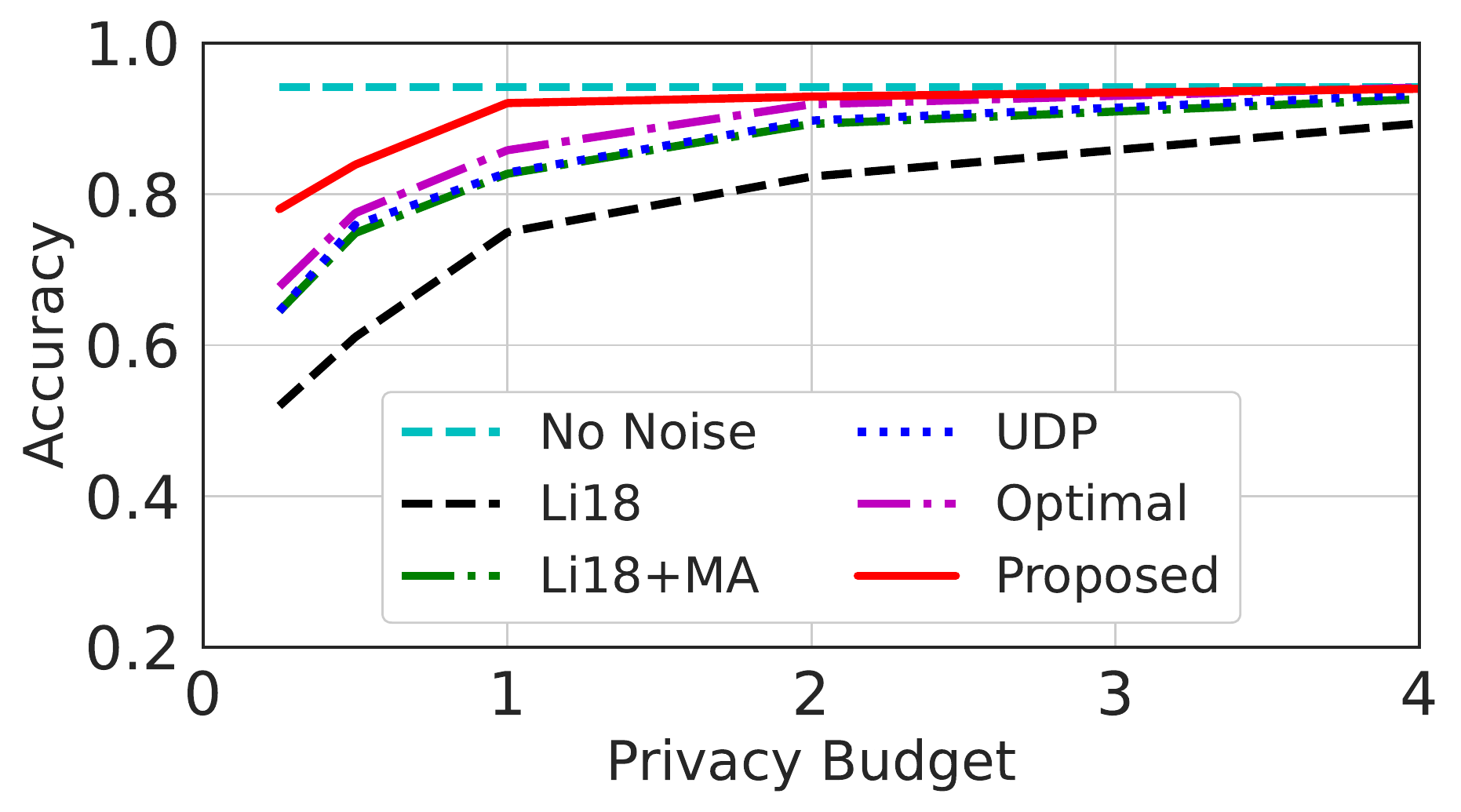}}
	\subfigure[Asynchronous]{\includegraphics[width=0.49\columnwidth]{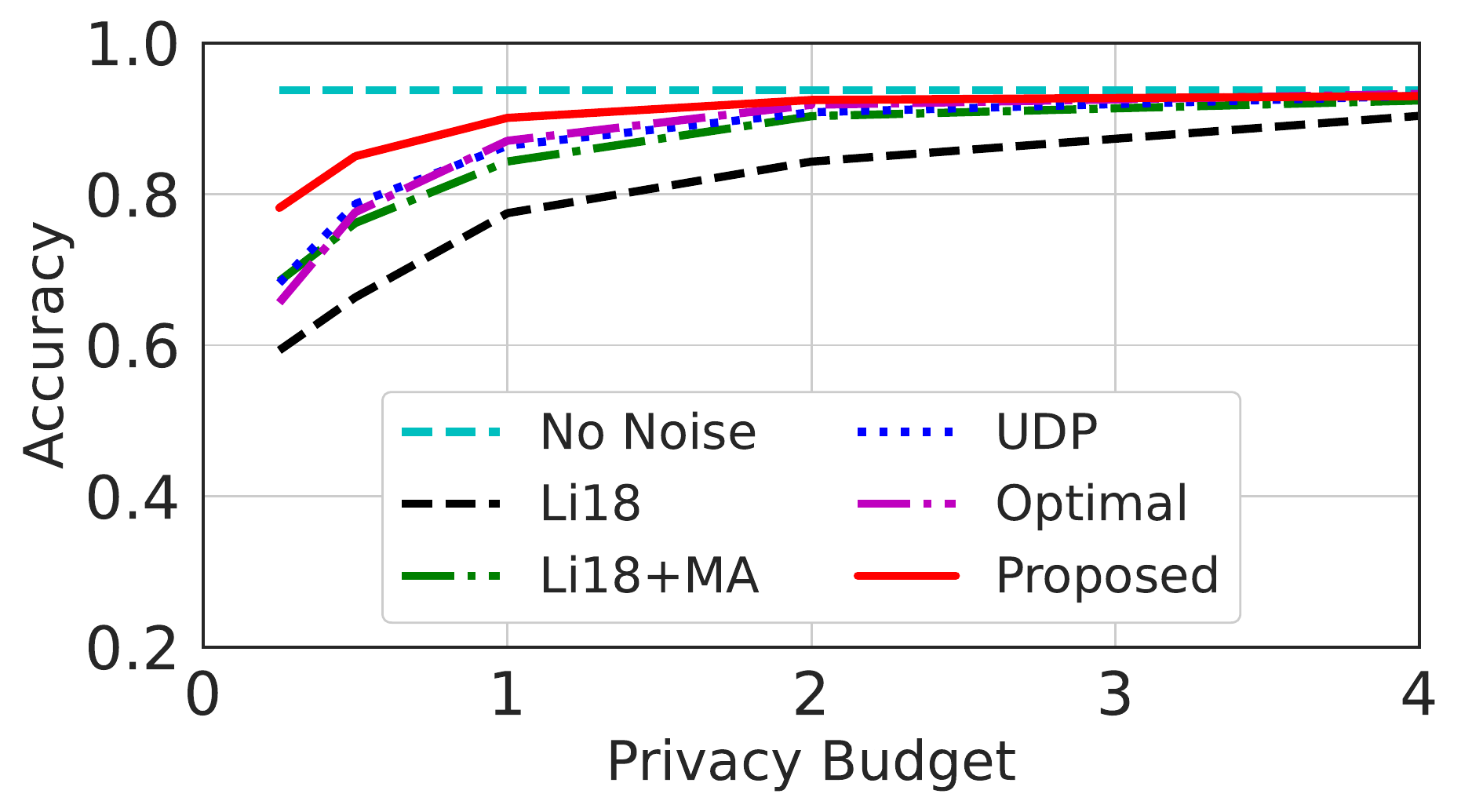}}

	\caption{\ronetwo{The average accuracy of the agents as the privacy budget increases.}}\label{fig:budget}
\end{figure}

\begin{figure}[t]
	\centering
	\subfigure[Connection Rate = 0.1]{\includegraphics[width=0.49\columnwidth]{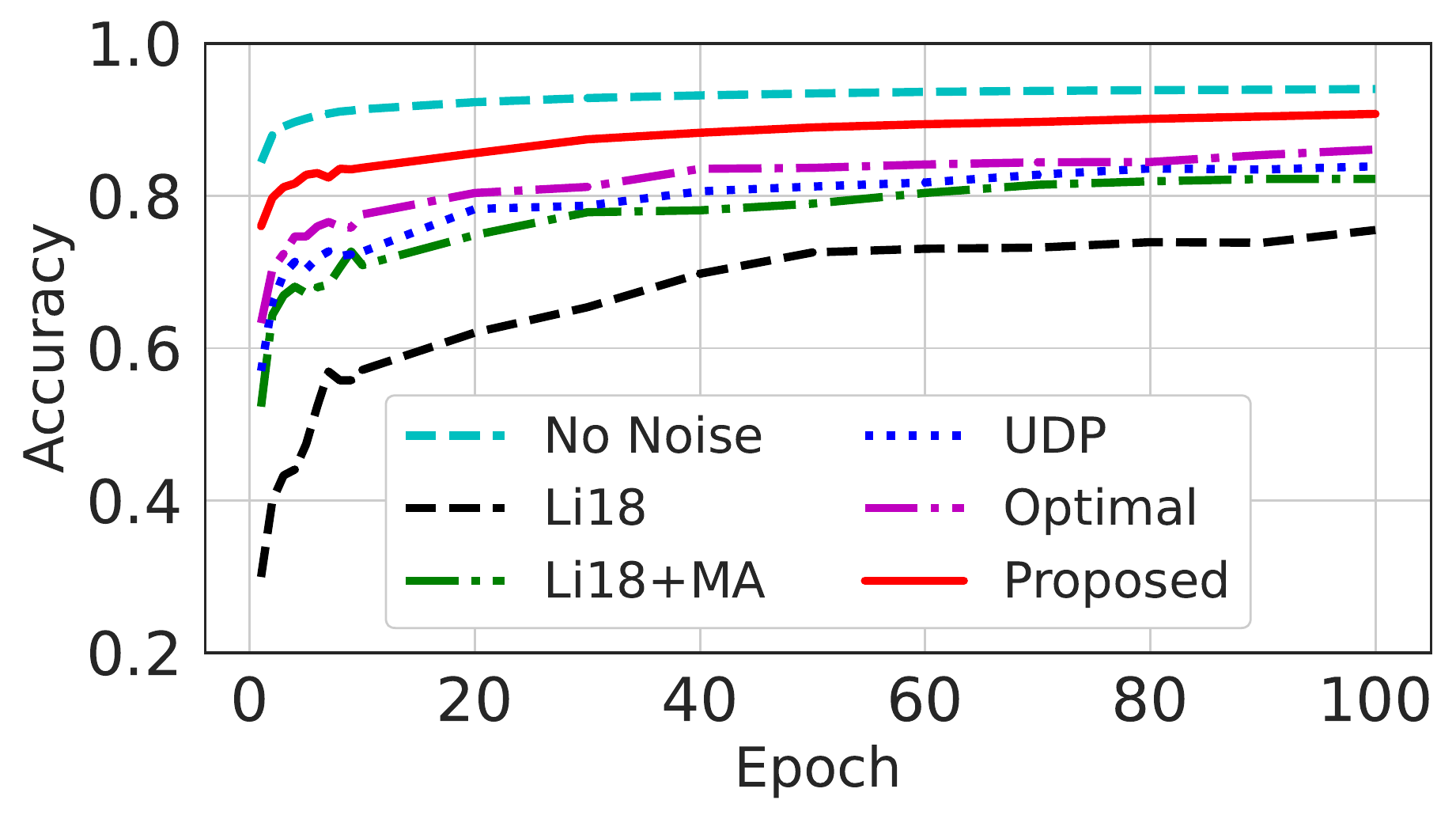}}
	\subfigure[Connection Rate = 0.4]{\includegraphics[width=0.49\columnwidth]{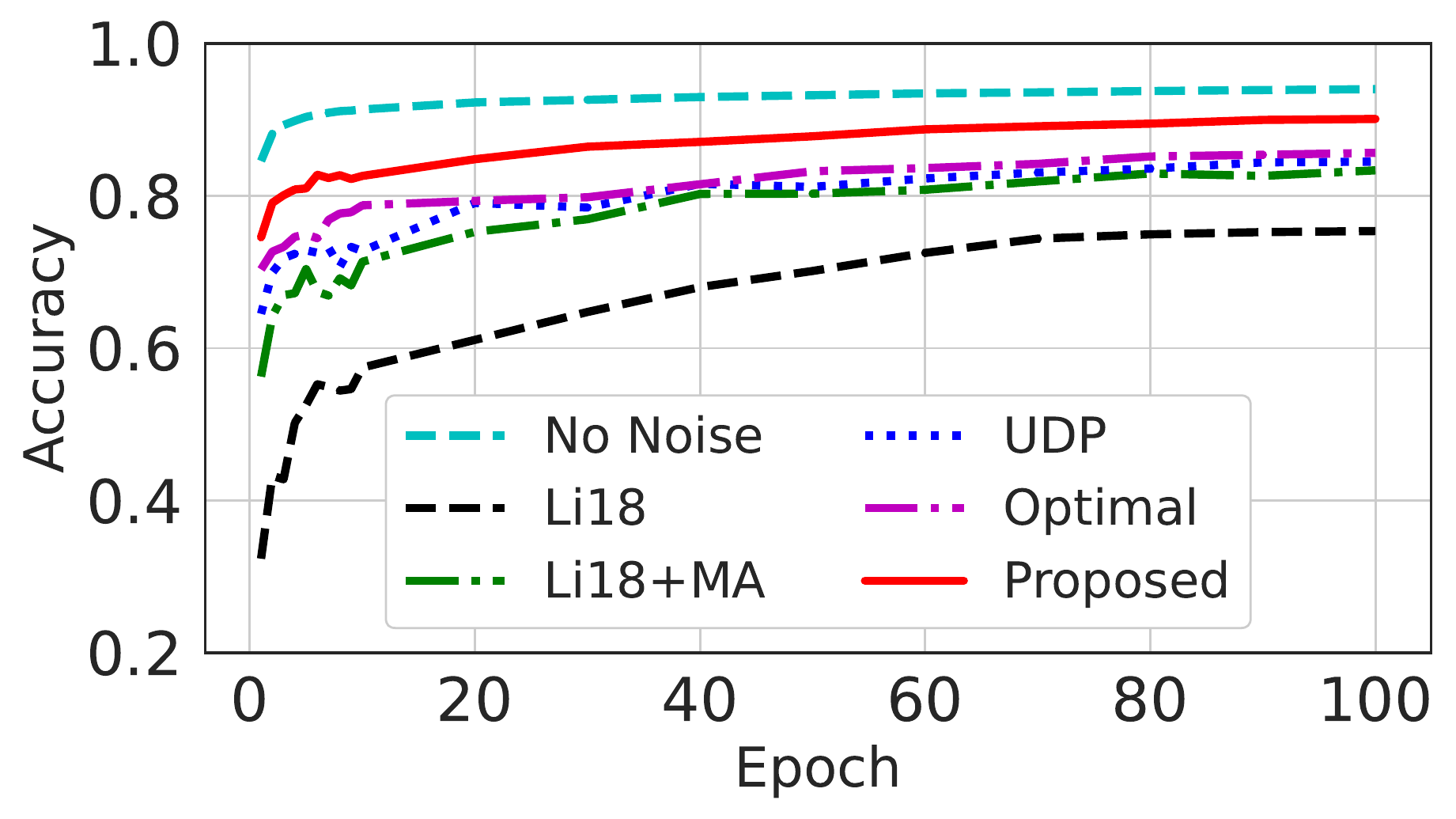}}
	\caption{\ronetwo{The average accuracy of the agents with different connection rates under the synchronous setting.}}
	\label{fig:rate}
\end{figure}

\noindent\ronetwo{\textbf{Parameters of the noise decay strategy.} We evaluate the impact of the parameters of the noise decay strategy on our \AlgName, i.e., $\gamma$ and $period$. In our experiments, $\gamma$ is set from 0.7 to 1.0 while $period$ varies from 8 to 12. Figure \ref{fig:impact} illustrates the average performance of the trained models under different parameter settings. Two observations are drawn. First, both $\gamma$ and $period$ have only limited impact on the performance of the decentralized learning systems, especially in Figure \ref{fig:impact} (a).  Second, as $period$ increases, the average accuracy of the trained models slightly decreases.}

\begin{figure}[t]
	\centering
	\subfigure[Number of Agents = 40]{\includegraphics[width=0.49\columnwidth]{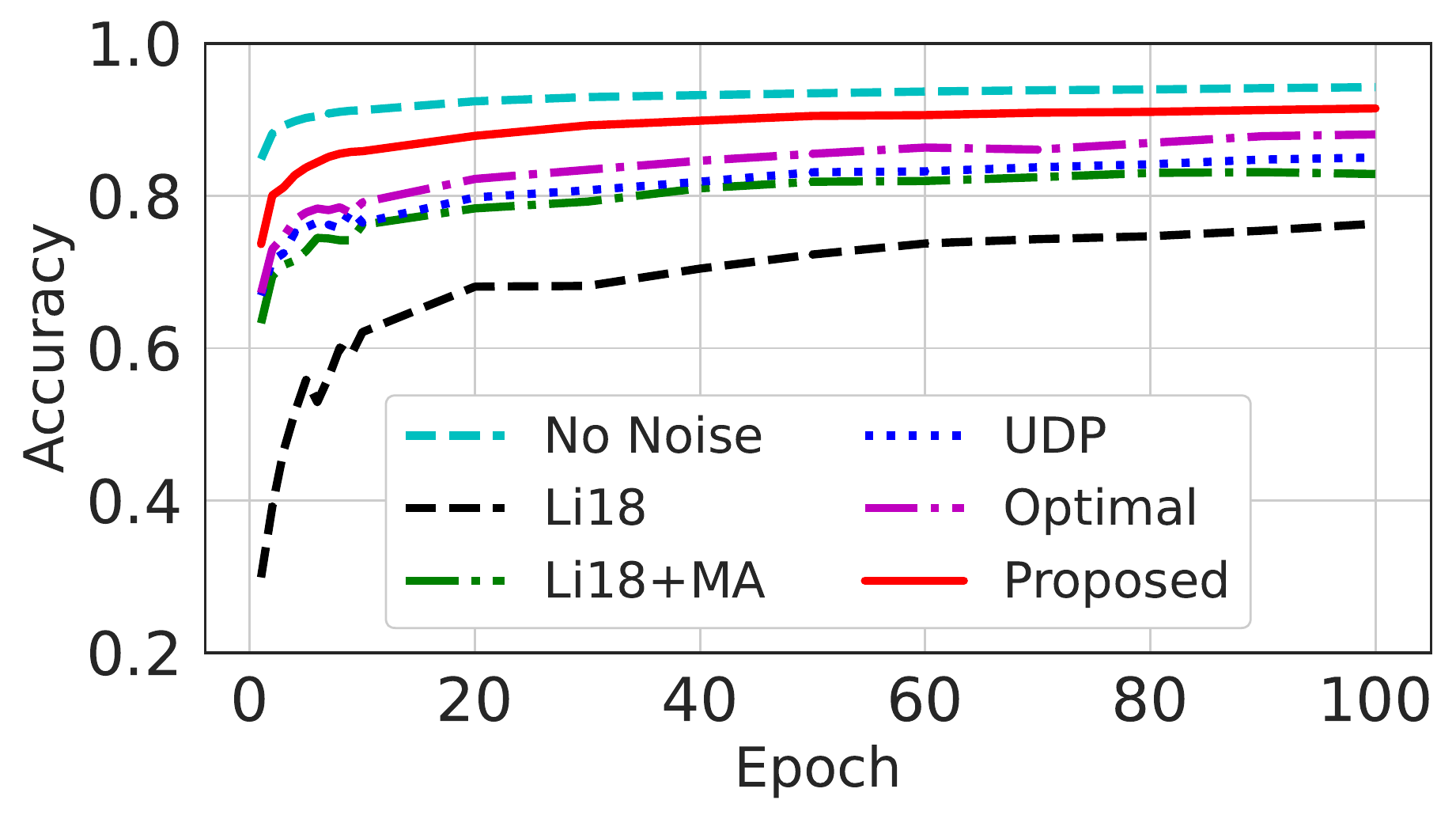}}
	\subfigure[Number of Agents = 50]{\includegraphics[width=0.49\columnwidth]{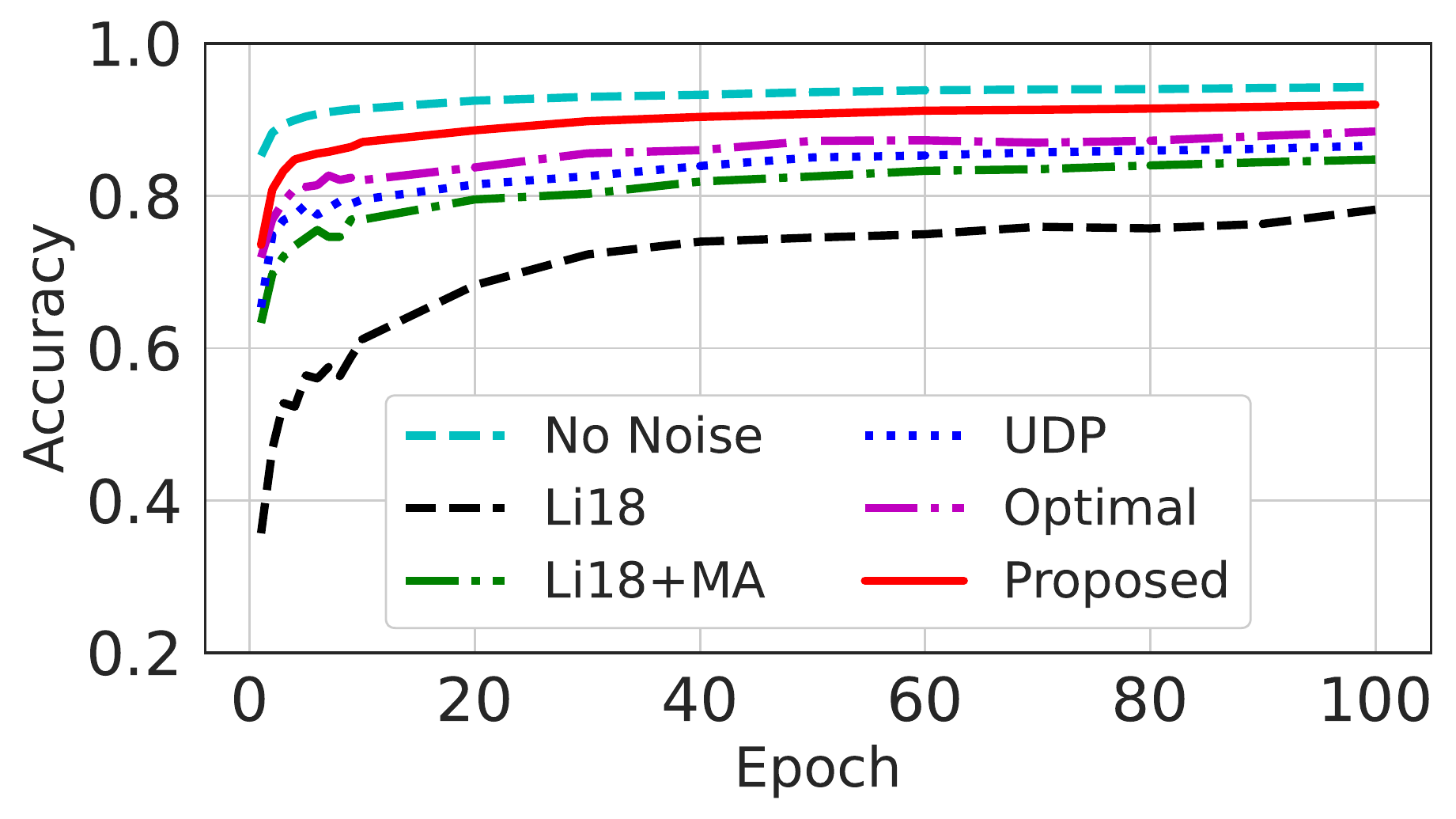}}
	\caption{\ronetwo{The average accuracy of the agents with different numbers of agents under the synchronous setting.}}
	\label{fig:number}
\end{figure}

\begin{figure}[t]
	\centering
	\subfigure[Impact of $\gamma$]{\includegraphics[width=0.49\columnwidth]{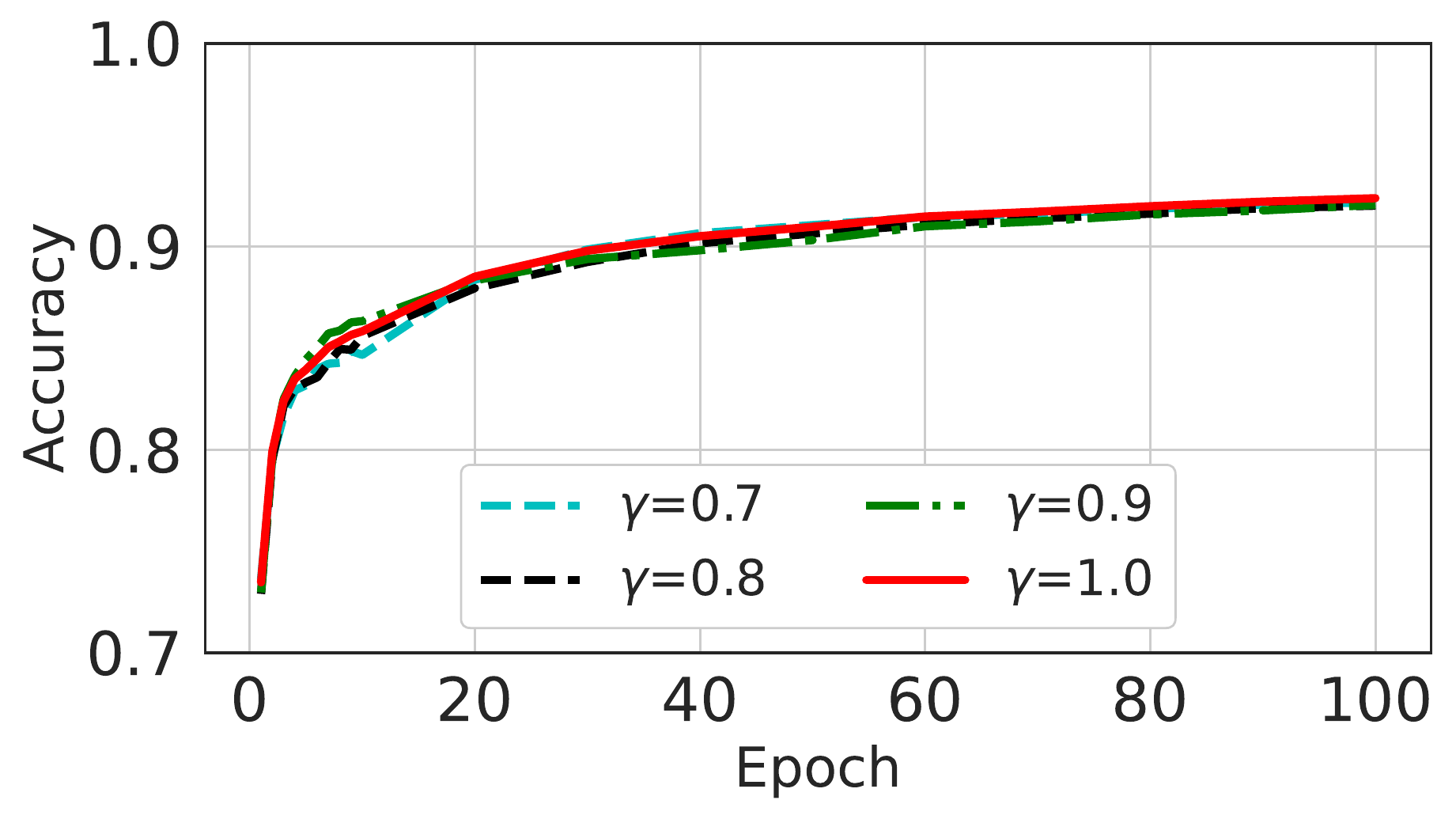}}
	\subfigure[Impact of $period$]{\includegraphics[width=0.49\columnwidth]{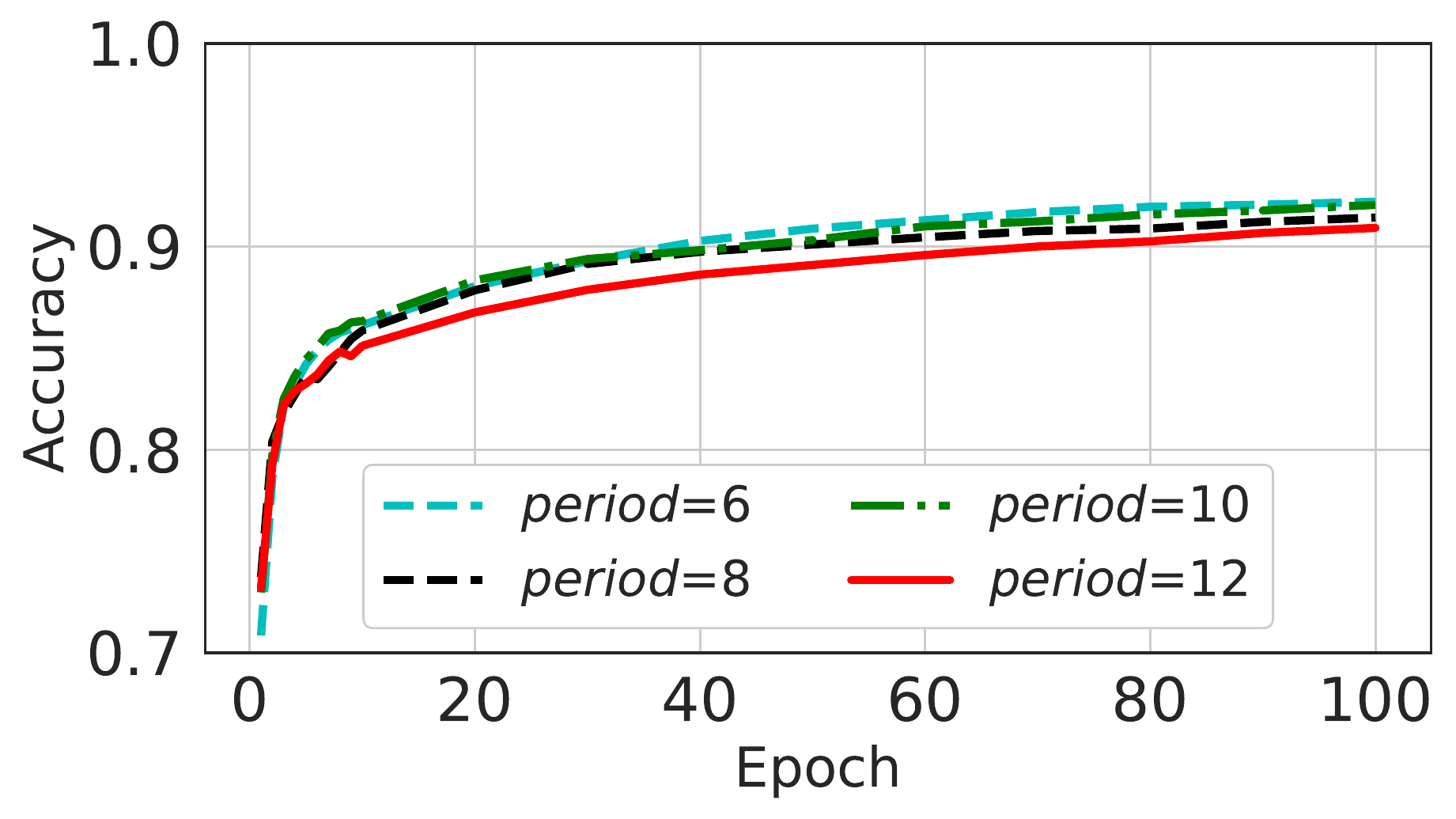}}
	\caption{\ronetwo{The average accuracy of the agents with different parameters of the noise decay strategy under the synchronous setting.}}
	\label{fig:impact}
\end{figure}

\noindent\textbf{Network topology.} We also evaluate our DP-SGD learning protocols on other typical network structures, \rone{such as the ring, star, tree, and mesh topologies. Figure \ref{fig:topology} illustrates the comparisons of decentralized network structures with different connections.
%In this set of experiments, agents are connected by following the corresponding topologies. For example, in the star connection setting, star agents are connected to each other and normal agents are connected to one star agent.
%To better simulate the practical scenario, we use the number of star agents to control the star topology: star agents are connected to
In each network topology, we set the number of total agents as 30.
%and evaluate the performance of different DP-SGD algorithms.
Figure \ref{fig:topology_acc} shows the learning curves of different DP-SGD algorithms for both synchronous and asynchronous modes.}

We observe that in the synchronous mode, the average accuracy scores of our learning protocals are significantly higher than other baselines, which is attributed to our proposed topology-aware strategy. \ronetwo{The performance gap among Li18+MA, UDP, and Optimal is small due to the limitation of the corresponding DP optimizations.} In the asynchronous mode, our protocol is slightly better than others, although the advantage is not as big as the synchronous mode.

\begin{figure*}[t!]
	\centering
	\includegraphics[width=1.5\columnwidth]{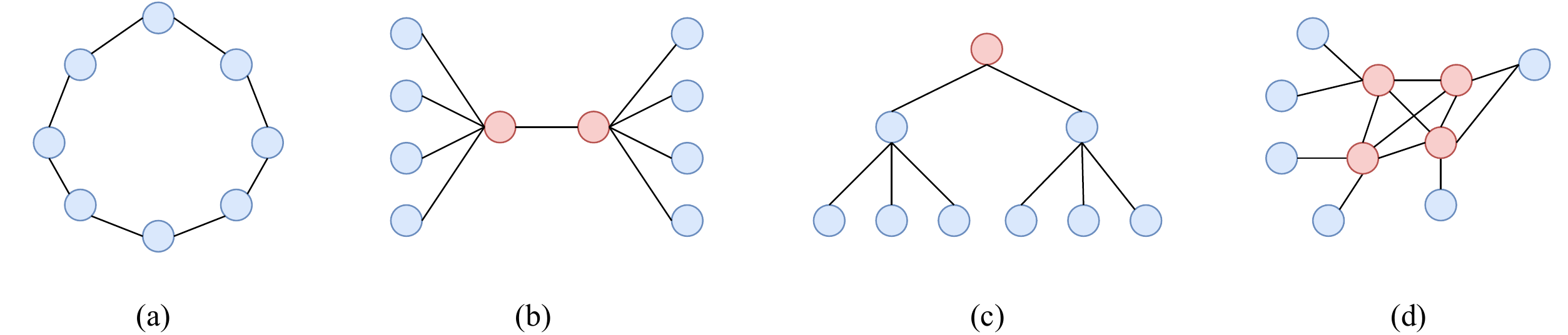}
	\caption{\rone{Four types of decentralized topologies. (a) Ring topology; (b) Star topology with two star agents; (c) Tree topology; (d) Partial mesh topology.}}\label{fig:topology}
\end{figure*}

\begin{table*}[t!]\centering
	\resizebox{\textwidth}{!}{
		\begin{tabular}{c@{}c@{}c@{}c}
			% \rotatebox{90}{$\ \ \ \ \ \ \ \ \  $Synchronous}&
			\includegraphics[width=0.33\textwidth]{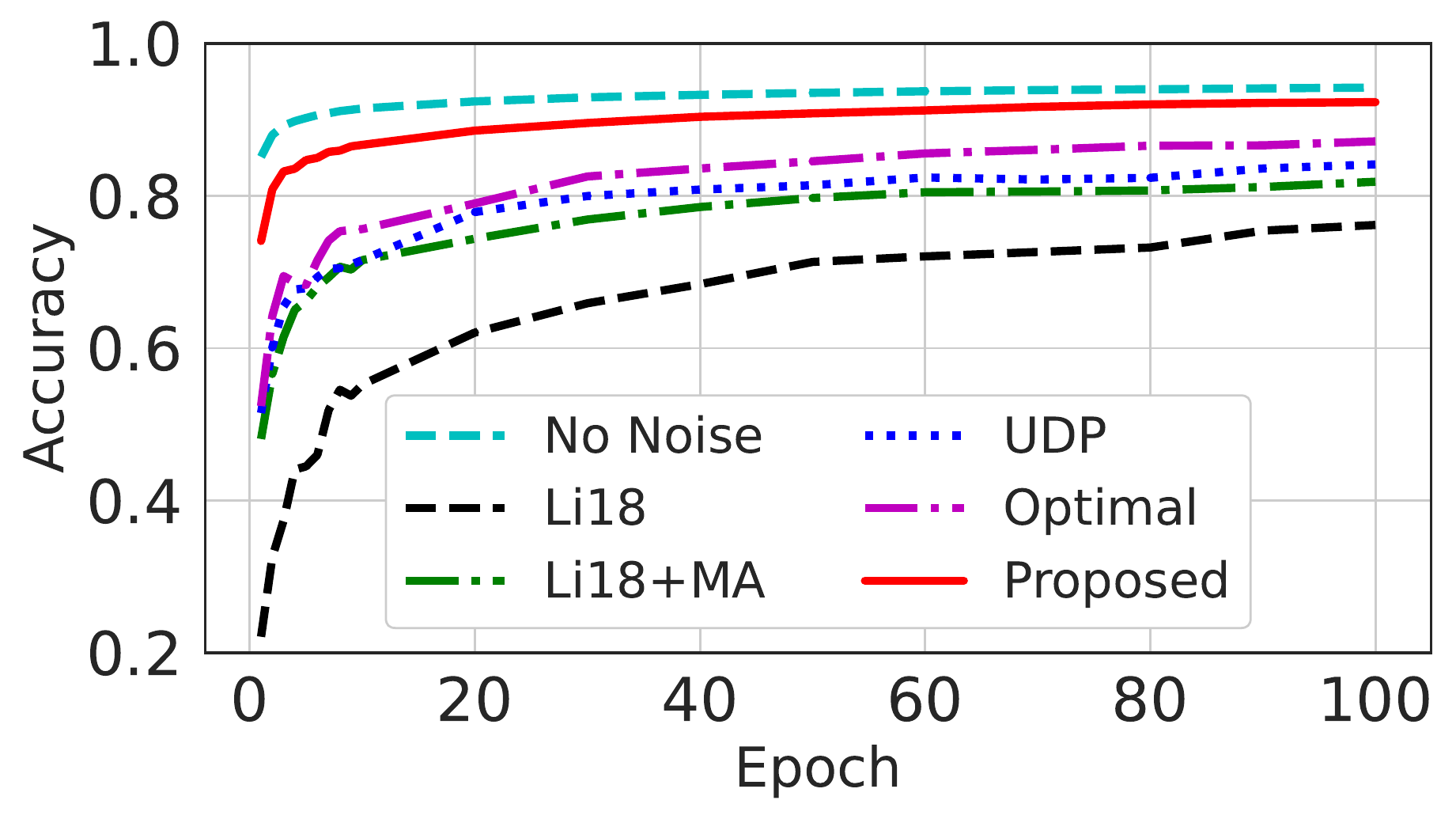}&\includegraphics[width=0.33\textwidth]{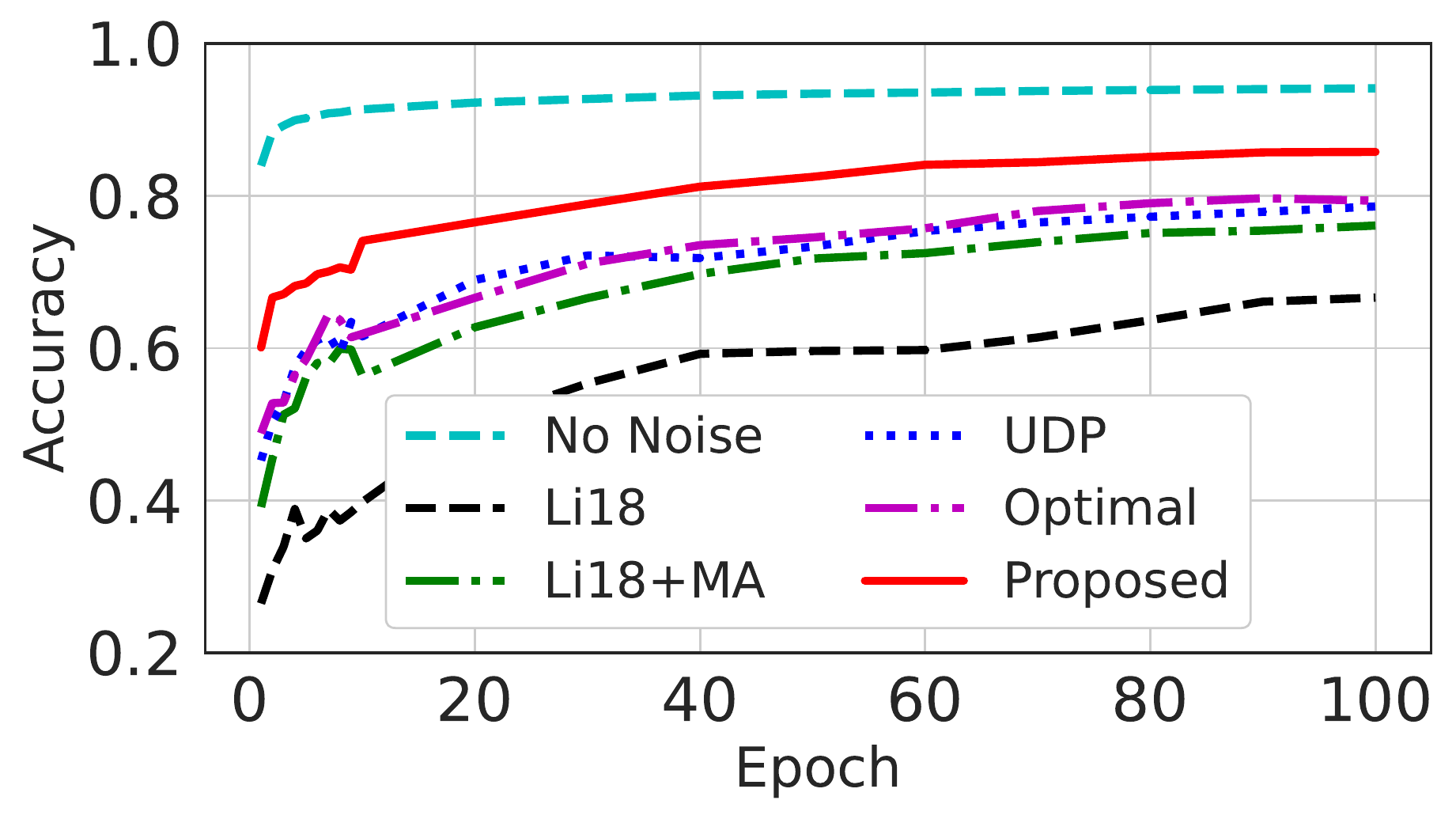}&\includegraphics[width=0.33\textwidth]{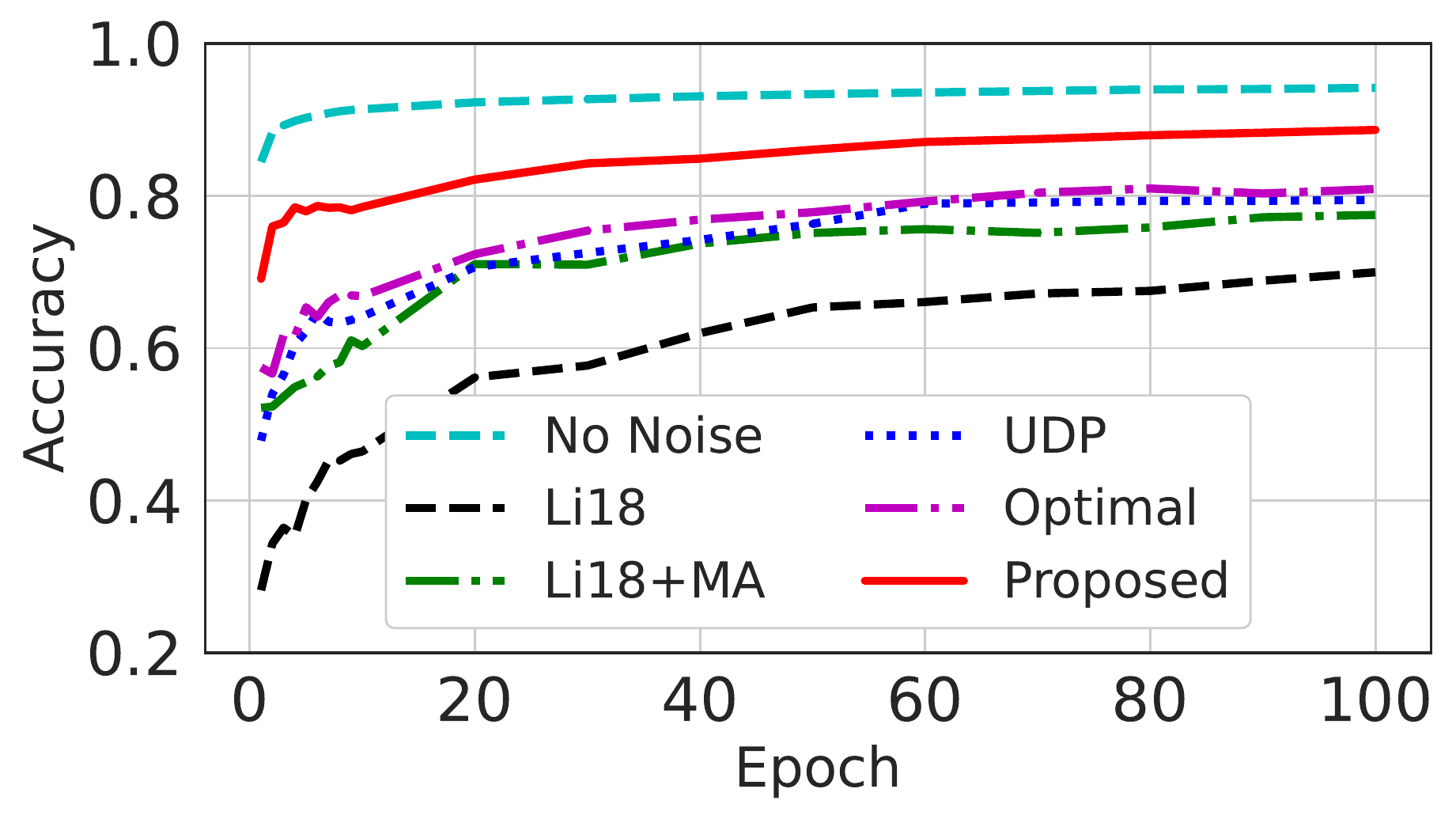}&\includegraphics[width=0.33\textwidth]{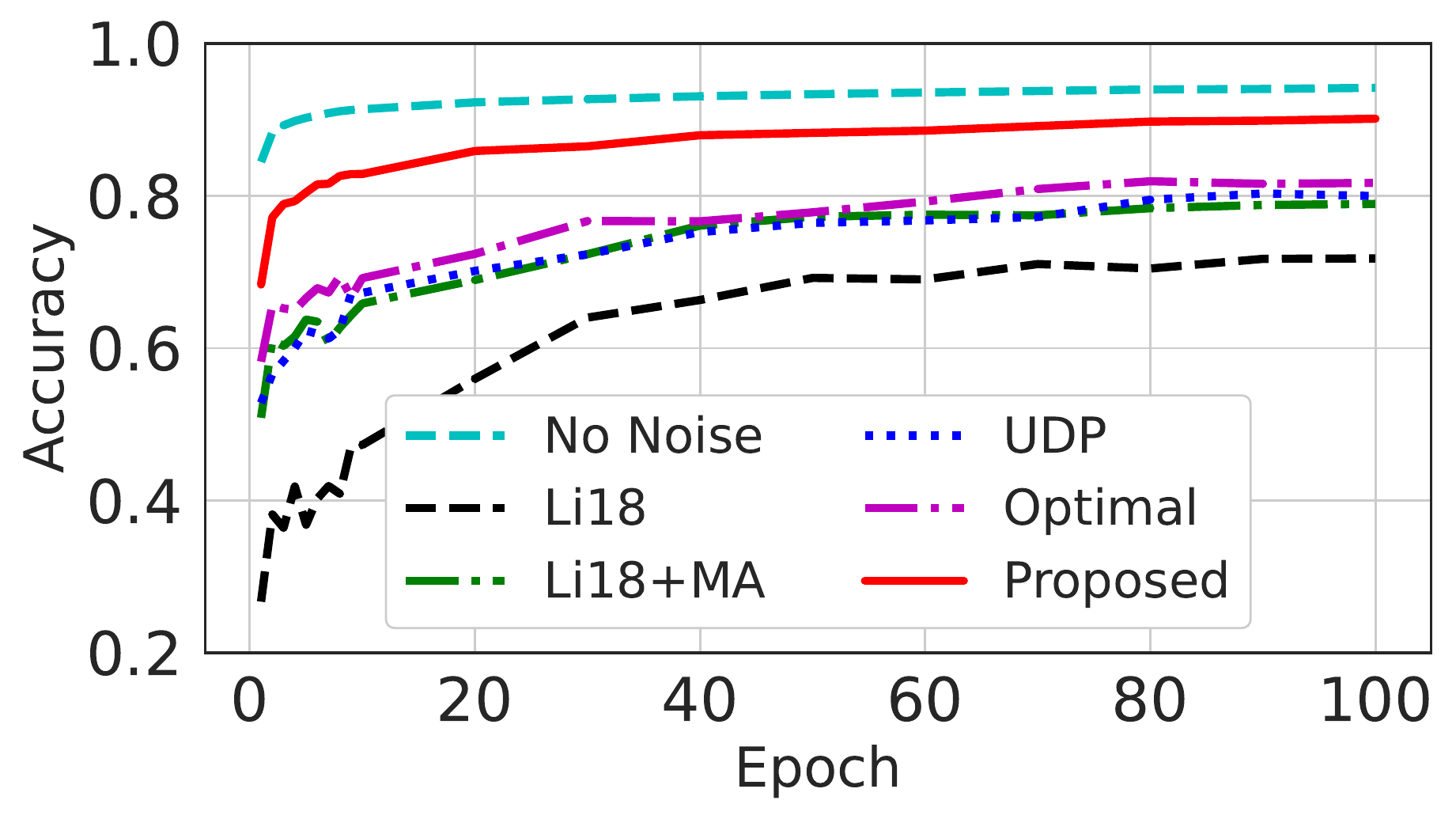}\\
			% \rotatebox{90}{$\ \ \ \ \ \ \ \ \  $Asynchronous}&
			\includegraphics[width=0.33\textwidth]{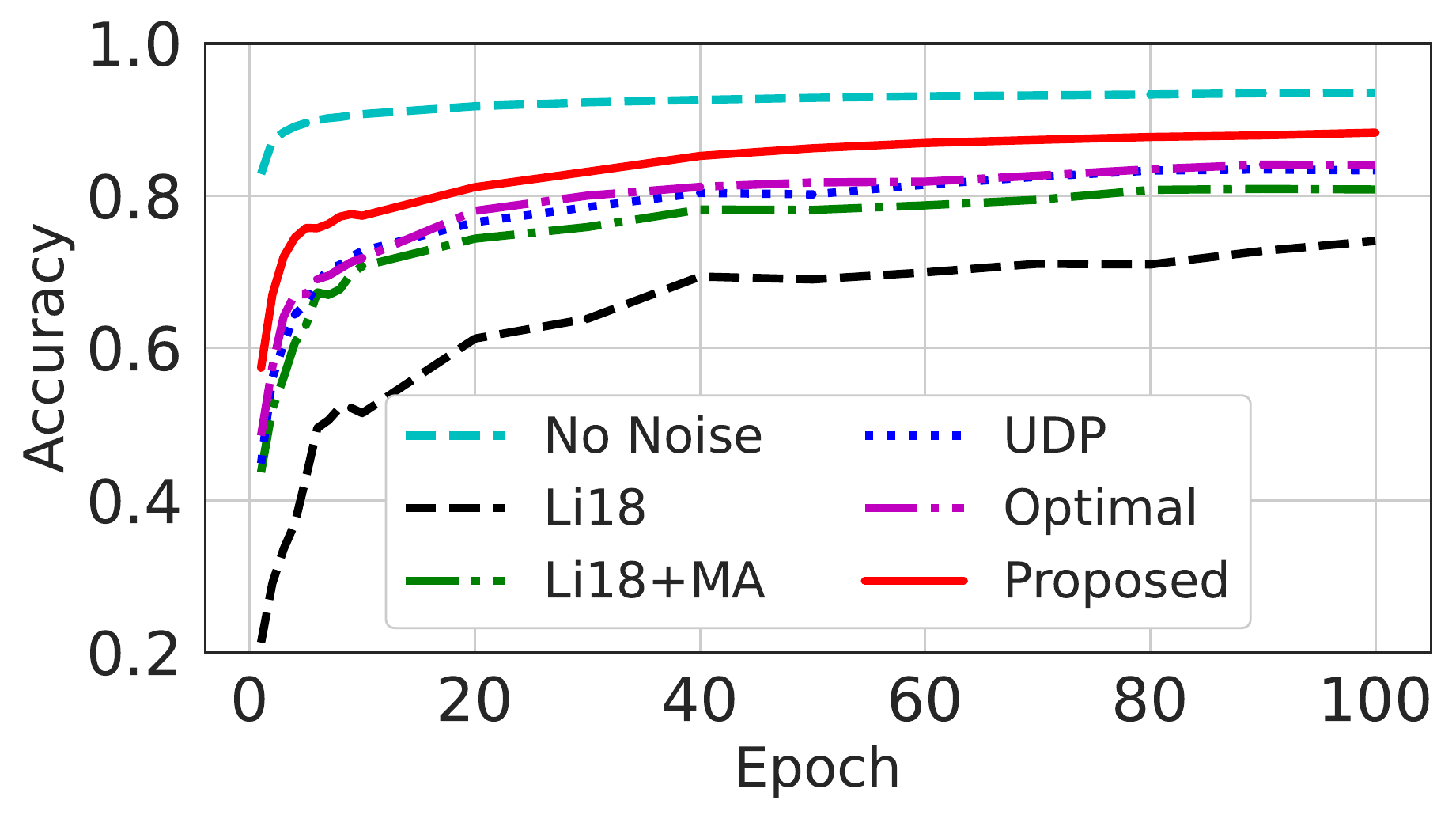}&\includegraphics[width=0.33\textwidth]{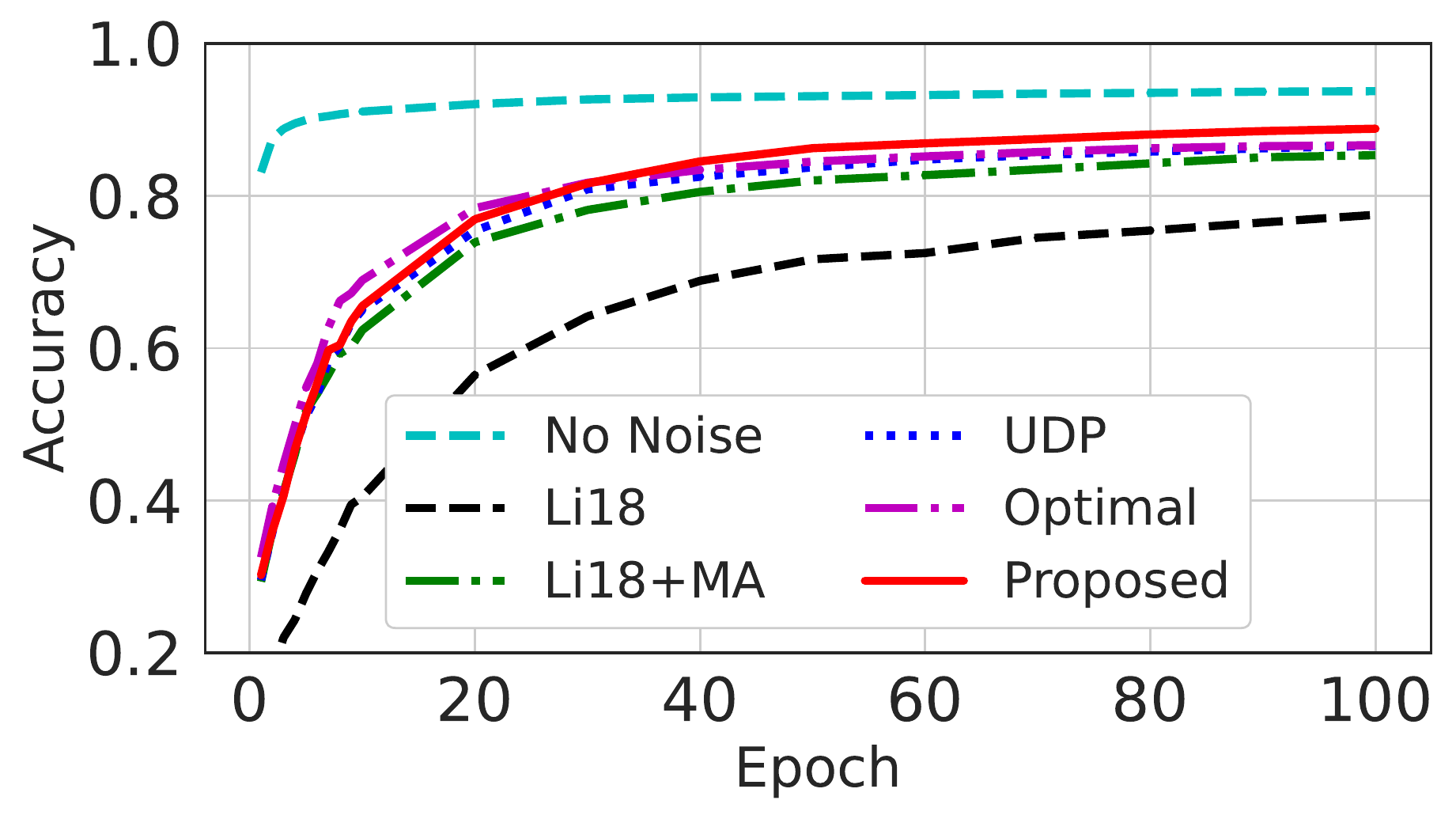}&\includegraphics[width=0.33\textwidth]{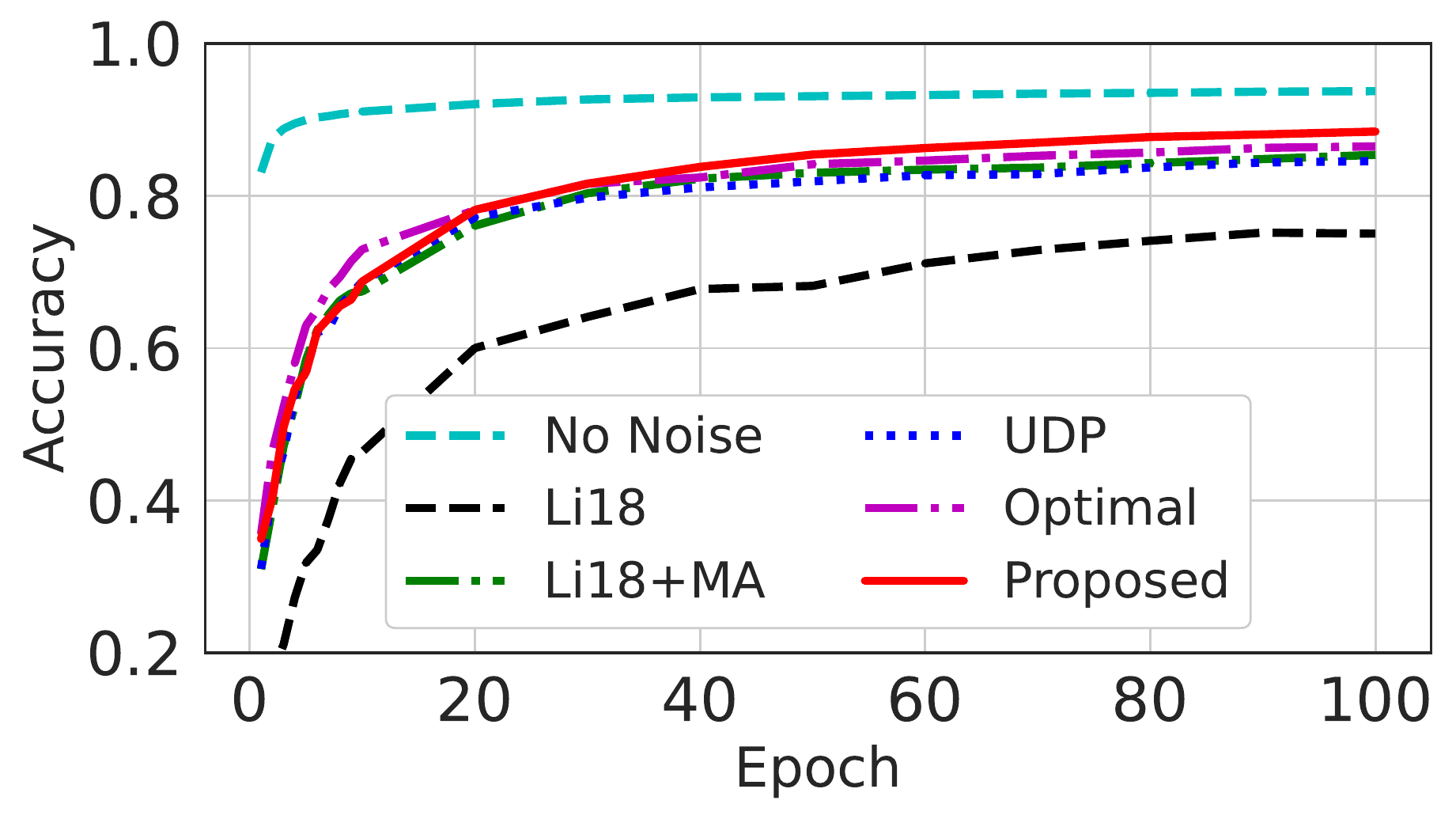}&\includegraphics[width=0.33\textwidth]{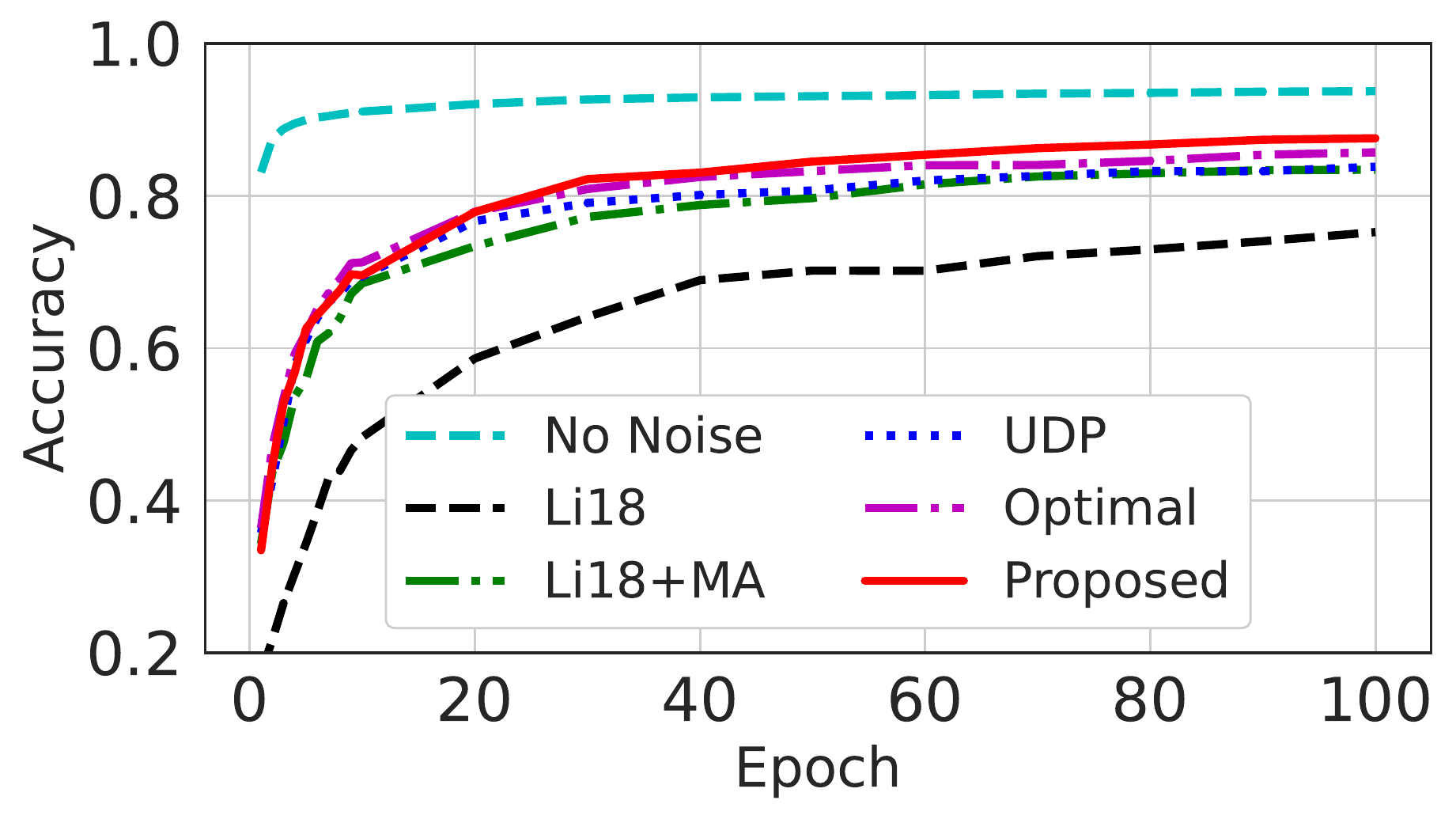}\\
			(a) Ring & (a) Star & (b) Tree & (c) Mesh
	\end{tabular}}
	\captionof{figure}{\rone{The average accuracy of the agents with different network topologies under both synchronous (first row) and asynchronous (second row) settings.}}\label{fig:topology_acc}
\end{table*}

\subsection{Effectiveness of Each Strategy}
Our DP-SGD learning protocols are composed of two strategies: topology-aware noise reduction (\textsc{Top}) and noise-aware noise decay (\textsc{ND}). We evaluate the integration of these two strategies in the above experiments. In this section, we measure the effectiveness of \textsc{Top} separately. Figures \ref{fig:sync_eff} and \ref{fig:async_eff} illustrate the performance comparison between \textsc{Top}, the integration \textsc{Top}+\textsc{ND}, and other DP-SGD algorithms.

We observe that in the synchronous mode, \textsc{Top} almost has the same performance as \textsc{Top}+\textsc{ND} at the first 20 epochs, as the reduced noise from \textsc{ND} strategy is quite small at the first two reduction steps (the noise is not reduced at the first reduction step). With more epochs, \textsc{Top}+\textsc{ND} is slightly better than \textsc{Top} only, caused by the effectiveness of \textsc{ND}. In the asynchronous mode, \textsc{Top} almost has the same performance as \textsc{Top}+\textsc{ND} especially when $\alpha$ equals 0.25.
% So there is no need to adopt time-aware noise decay in this scenario.

% \subsection{Impact of Fully Connected Decentralized Networks}
% We deploy the proposed protocols on a fully connected decentralized network, where each agent can communicates with all other agents. Expect the communication network, we adopt the same system settings and configurations as the ones on the randomly connected networks.

% Fig. \ref{fig:network_f} demonstrates the average accuracy of the agents in the synchronous and asynchronous modes, respectively. We observe that our solutions (Proposed) work well on the fully connected decentralized network and approach the baseline (No Noise) as the training epoch increases in both of the two modes. In contrast, both Li18 and Li+MA cannot converge to the optimal model in the synchronous model. For the asynchronous fully connected network, Li18 converges slowly and needs more epochs to approximate the baseline.

\subsection{Results of a More Complicated Dataset}
\label{sec:cifar}
We also evaluate \AlgName on a more complicated training task over CIFAR10 dataset. The model to be trained is a Convolutional Neural Network, consisting of two max–pooling layers and three fully connected layers. The system settings and configurations are the same as the ones on MNIST. We set $\alpha$ and the connection rate as 0.25 and 0.2.
% , respectively.

Figure \ref{fig:cifar} illustrates the experimental results in the synchronous and asynchronous modes. We observe that our solution (Proposed) outperforms prior DP-SGD algorithms and approaches the baseline (No Noise) as the training epoch increases in both of the two modes. The other four baselines even do not converge in the presence of Gaussian noise. The reason is that each parameter in the model needs to be appended with random noise to satisfy DP requirement. When the model becomes more complicated with more parameters, the overall divergence between the original model and the DP-protected model becomes larger, making it hard to converge. This scenario will never happen in our solution.

\rtwo{Our \AlgName is designed to be general for various learning tasks and datasets. In terms of the computational complexity, the protocols require each agent to calculate the scale of noise that is added to its estimates at each iteration. The cost of calculating the noise scale is a constant, while the calculation of adding noise to estimates is proportional to the number of parameters, which is negligible compared to the training overhead. So we believe our solution is practical and scalable to higher-dimensional datasets and more complex neural networks. As future work, we will evaluate \AlgName on larger-scale decentralized learning tasks. }

\begin{figure}[t!]
	\centering
	\subfigure[$\alpha = 0.5$]{\includegraphics[width=0.49\columnwidth]{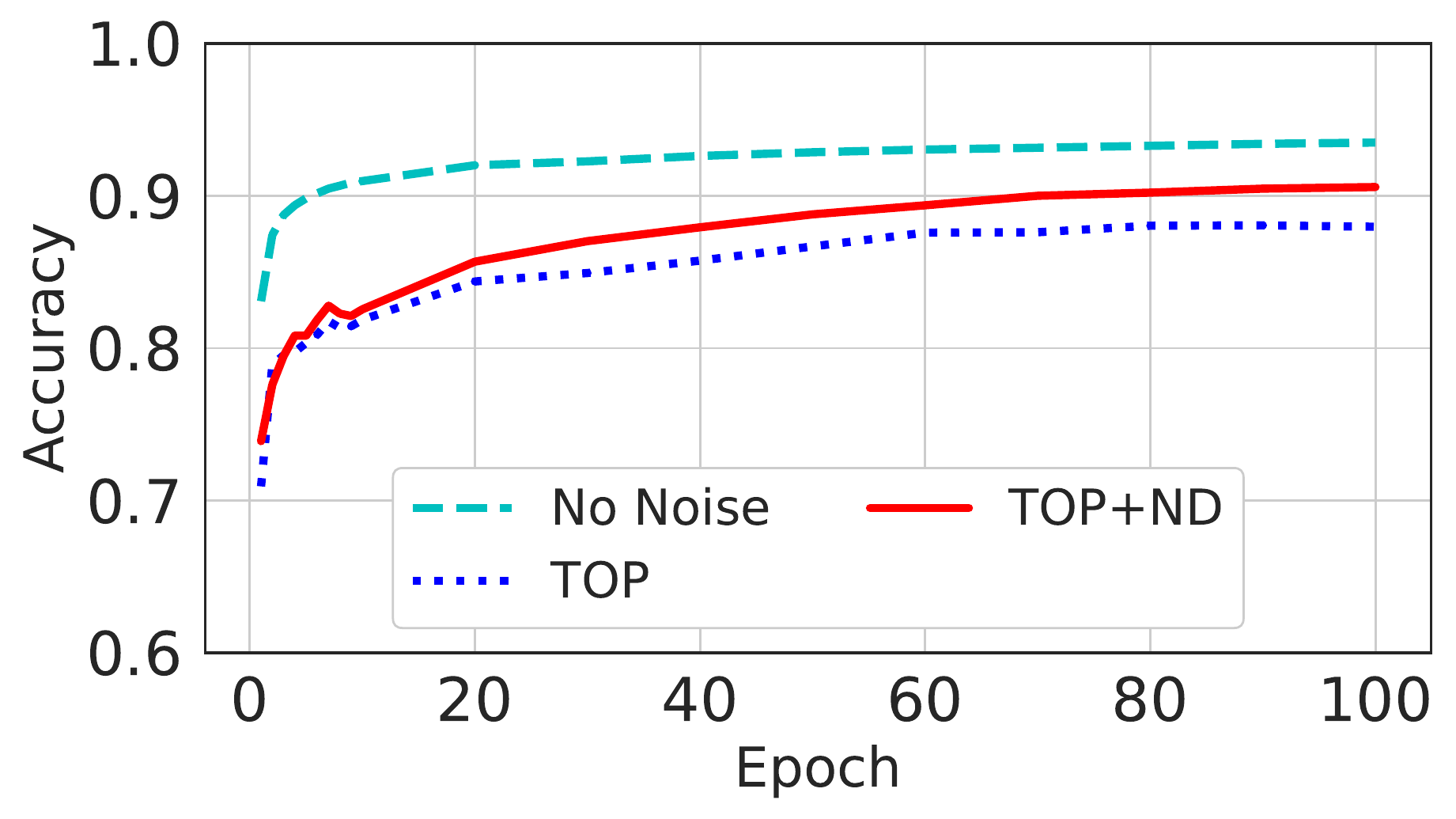}}
	\subfigure[$\alpha = 0.25$]{\includegraphics[width=0.49\columnwidth]{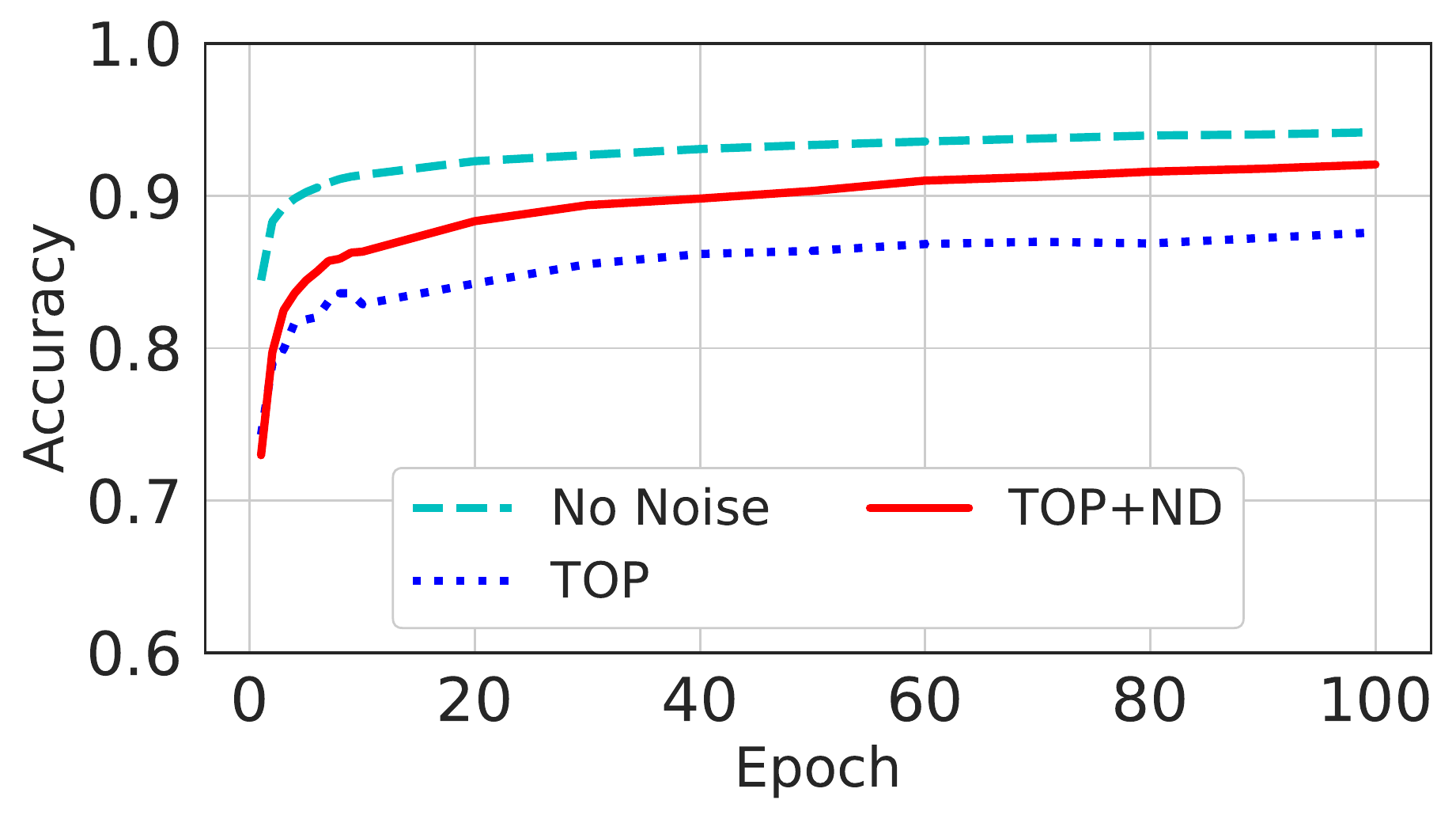}}
	\caption{The effectiveness of topology-aware noise reduction with different $\alpha$ values under synchronous settings.}
	\label{fig:sync_eff}
\end{figure}

\begin{figure}[t!]
	\centering
	\subfigure[$\alpha = 0.5$]{\includegraphics[width=0.49\columnwidth]{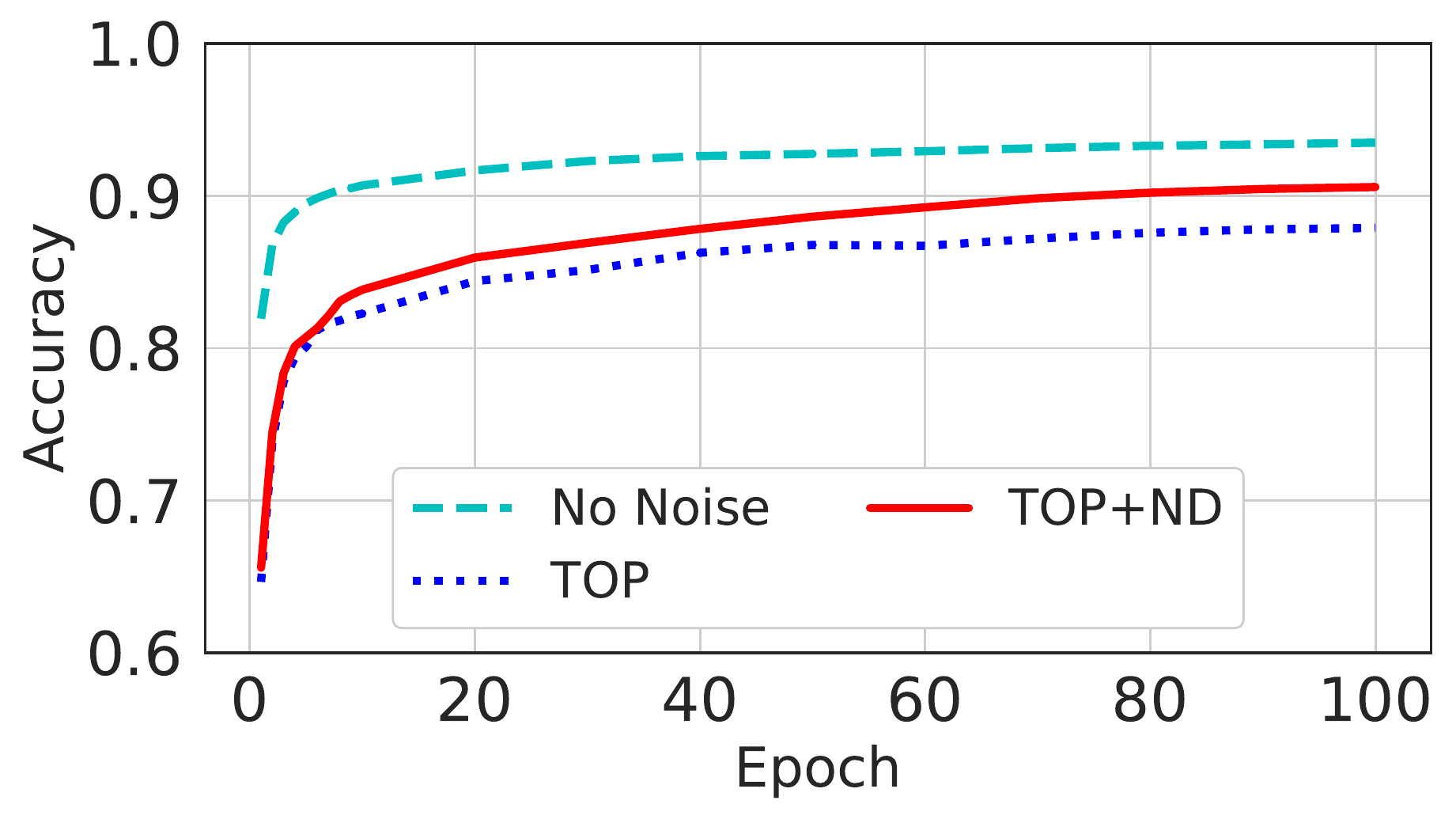}}
	\subfigure[$\alpha = 0.25$]{\includegraphics[width=0.49\columnwidth]{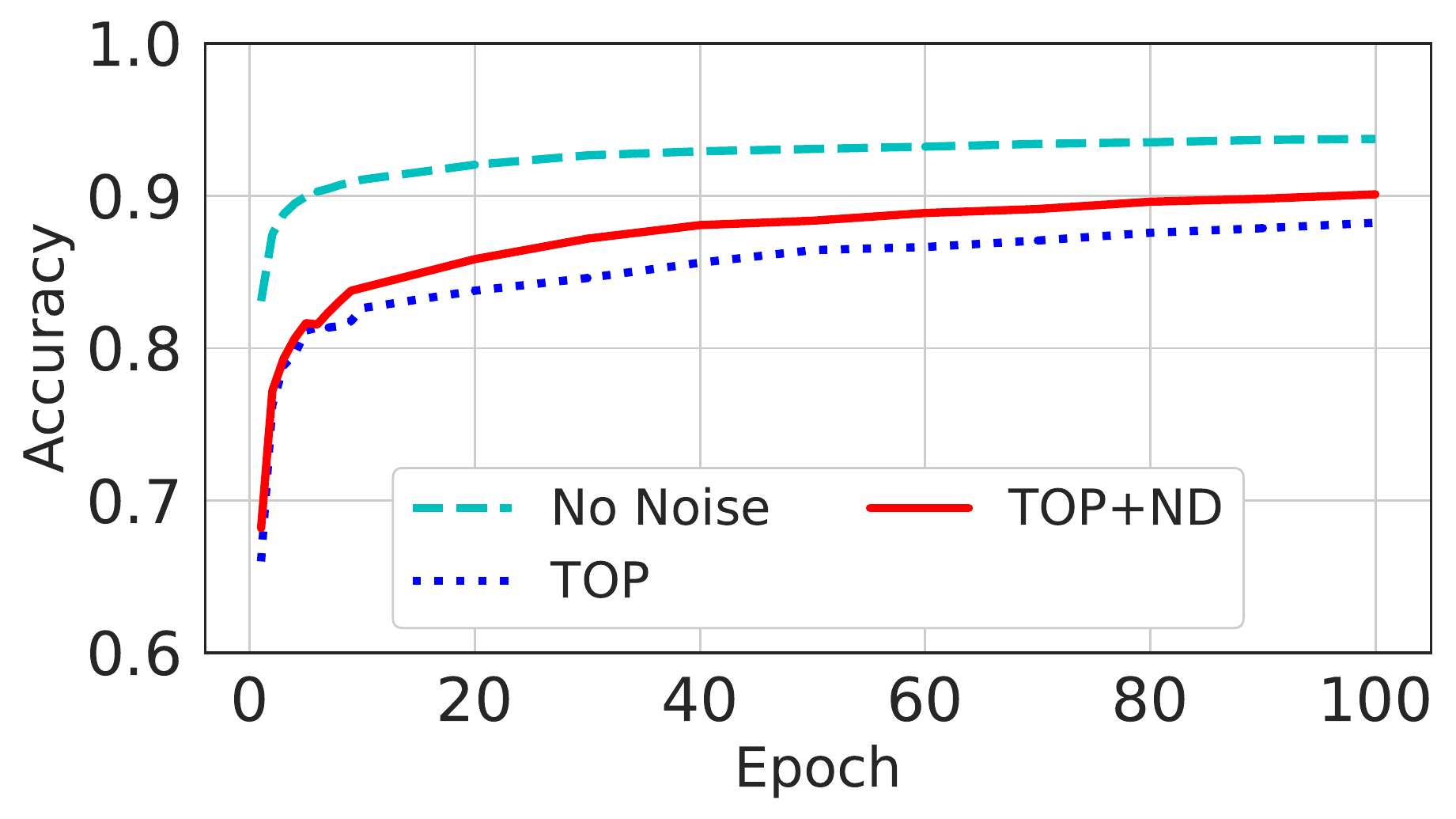}}
	\caption{The effectiveness of topology-aware noise reduction with different $\alpha$ values under asynchronous settings.}
	\label{fig:async_eff}
\end{figure}

% \begin{figure}[h!]
% 	\centering
% 	\subfigure[$\alpha = 0.75$]{\includegraphics[width=0.49\columnwidth]{figs/mnist_accs30_100_0.2_0.75_1_4.pdf}}
% 	\subfigure[$\alpha = 0.125$]{\includegraphics[width=0.49\columnwidth]{figs/mnist_accs30_100_0.2_0.125_1_4.pdf}}
% 	\caption{The average accuracy of the agents with different $\alpha$ values under synchronous settings.}
% 	\label{fig:weight}
% \end{figure}

% \begin{figure}[h!]
% 	\centering
% 	\subfigure[Synchronous]{\includegraphics[width=0.49\columnwidth]{figs/mnist_accs30_100_0.2_0.25_1_4_central.pdf}}
% 	\subfigure[Asynchronous]{\includegraphics[width=0.49\columnwidth]{figs/mnist_accs30_100_0.2_0.25_0_4_central.pdf}}
% 	\caption{The average accuracy of the agents in different modes on the centralized network.}
% 	\label{fig:network_c}
% \end{figure}

% \begin{figure}[h!]
% 	\centering
% 	\subfigure[Synchronous]{\includegraphics[width=0.49\columnwidth]{figs/mnist_accs30_100_0.2_0.25_1_4_full.pdf}}
% 	\subfigure[Asynchronous]{\includegraphics[width=0.49\columnwidth]{figs/mnist_accs30_100_0.2_0.25_0_4_full.pdf}}
% 	\caption{The average accuracy of the agents in different modes on the fully connected network.}
% 	\label{fig:network_f}
% \end{figure}

\begin{figure}[t!]
	\centering
	\subfigure[Synchronous]{\includegraphics[width=0.49\columnwidth]{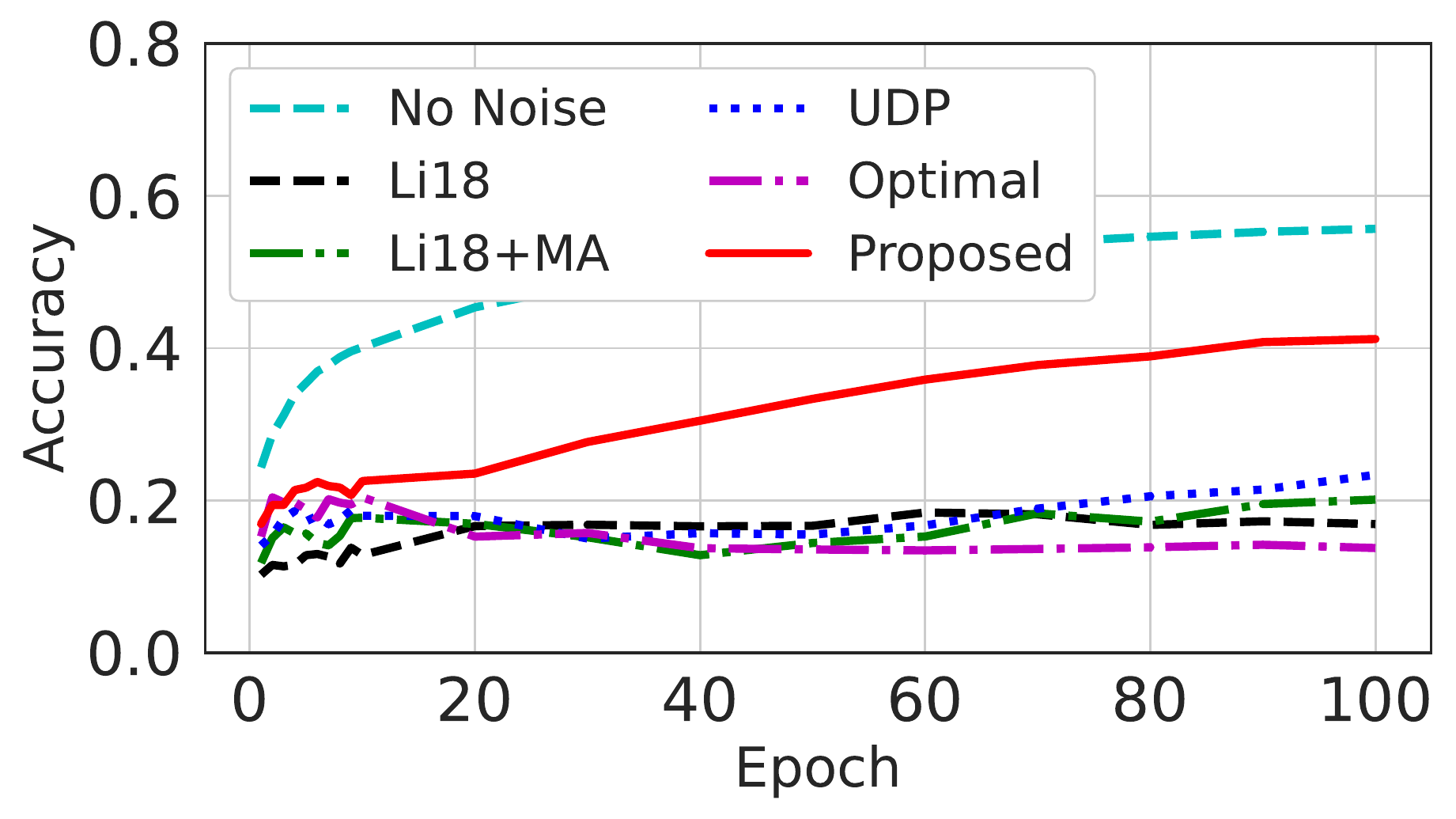}}
	\subfigure[Asynchronous]{\includegraphics[width=0.49\columnwidth]{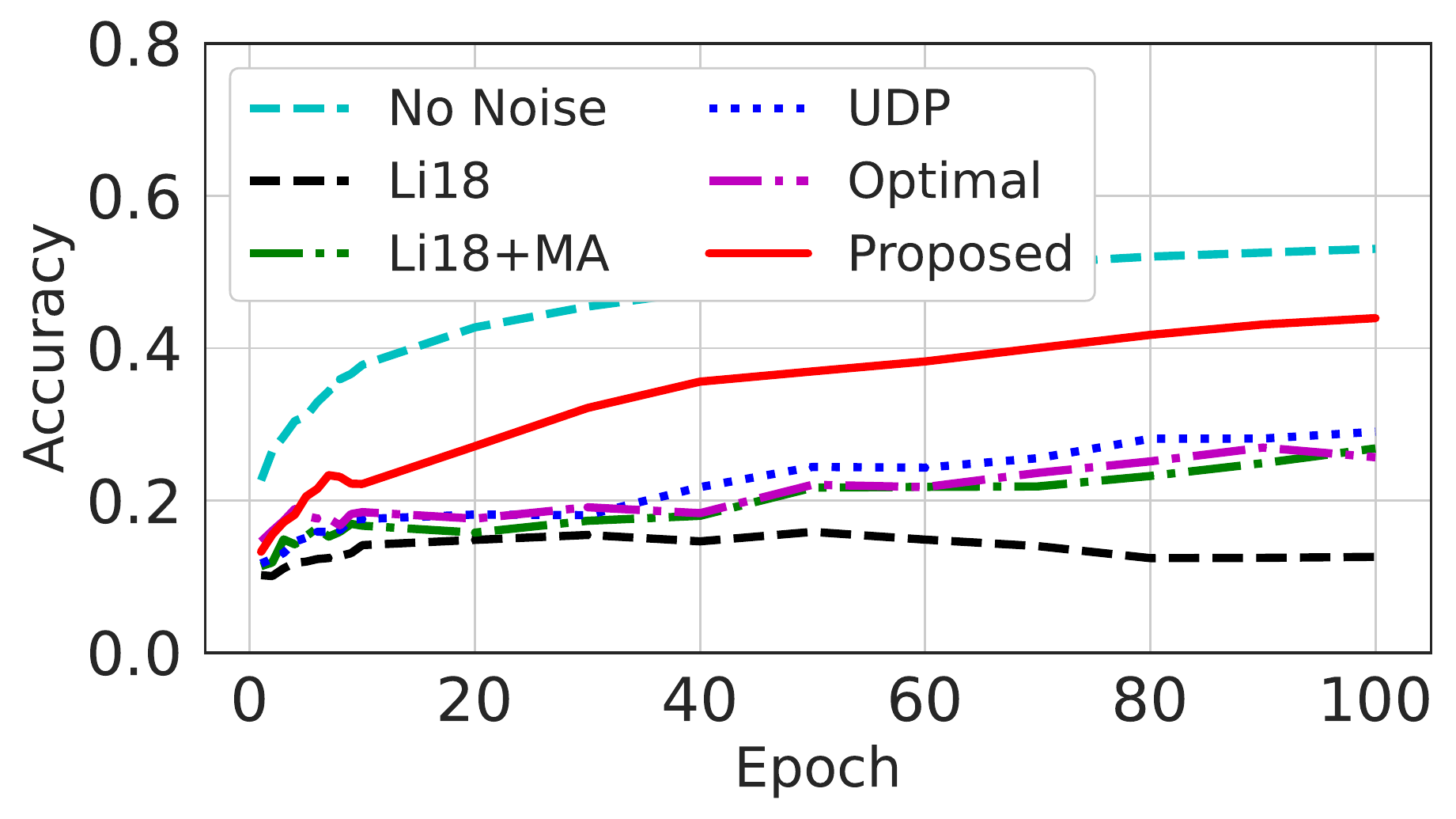}}
	\caption{\ronetwo{The average accuracy of the agents in different modes on CIFAR10.}}
	\label{fig:cifar}
\end{figure}
\section{Conclusion}
\label{sec:conclusion}
In this paper, we propose \AlgName, a novel DP-based method to preserve the privacy of decentralized learning systems. The topology-aware technique leverages the network topology to reduce the noise scale and improve model usability while still satisfying the DP requirement. We apply the time-aware noise decay technique to the decentralized systems to further optimize the model performance. We design learning protocols, which enables the topology-aware technique and adapts to both the synchronous and asynchronous learning modes. To the best of our knowledge, this is the first study to utilize network topology for DP optimization, and deploy DP protection to asynchronous decentralized systems. Formal analysis and empirical evaluations indicate that \AlgName can guarantee the privacy requirement, and achieve better trade-offs between privacy and usability under different system configurations.

\bibliographystyle{body/IEEEtran}
\bibliography{body/ref}
\end{document}